\newcommand{\CI}{\mathrel{\perp\mspace{-10mu}\perp}}
\newcommand{\nCI}{\centernot{\CI}}
\newtheorem{theorem}{Theorem}
\newtheorem{lemma}{Lemma}
\newtheorem{assumption}{Assumption}
\newtheorem{definition}{Definition}
\newtheorem{remark}{Remark}
\title{Auctions and Peer Prediction \\ for Academic Peer Review}
\author{%
  Siddarth Srinivasan \\
  School of Computer Science and Engineering\\
  University of Washington\\
  Seattle, WA 98195 \\
  \texttt{sidsrini@cs.washington.edu} \\
  \And
  Jamie Morgenstern \\
  School of Computer Science and Engineering\\
  University of Washington\\
  Seattle, WA 98195 \\
  \texttt{jamiemmt@cs.washington.edu} \\
}
\begin{document}

\maketitle

\begin{abstract}
Peer reviewed publications are considered the gold standard in certifying and disseminating ideas that a research community considers valuable. However, we identify two major drawbacks of the current system: (1) the overwhelming demand for reviewers due to a large volume of submissions, and (2) the lack of incentives for reviewers to participate and expend the necessary effort to provide high-quality reviews. In this work, we adopt a mechanism-design approach to propose improvements to the peer review process, tying together the paper submission and review processes and simultaneously incentivizing high-quality submissions and reviews. In the submission stage, authors participate in a VCG auction for review slots by submitting their papers along with a bid that represents their expected value for having their paper reviewed. For the reviewing stage, we propose a novel peer prediction mechanism (H-DIPP) building on recent work in the information elicitation literature, which incentivizes participating reviewers to provide honest and effortful reviews. The revenue raised in the submission stage auction is used to pay reviewers based on the quality of their reviews in the reviewing stage.
\end{abstract}

\section{Introduction}

Scientific publishing through peer review is widely regarded as the gold standard for disseminating scientific research. Yet, despite its widespread use and acceptance, it is plagued by a number of well-documented problems: willing reviewers are often in short supply and reviews themselves are often of poor quality, giving rise to noisy acceptance decisions \citep{lawrence_2015, shah2018design}. Studying the history of peer review in sociology, \citet{merriman2020peer} writes that while peer review is `a practice intended to assess the merits of pieces of scholarship, its features did not arise because they were believed to be an especially apt means of identifying the best work.' The issues with peer review have become especially salient in machine learning as well as computing more broadly. Computing relies largely on a conference model for publishing and disseminating research, and while the number of submissions to major conferences has exploded, the pool of qualified reviewers has grown more slowly \citep{sculley2018winner, shah2019principled, stelmakh2020novice}. As these issues are only likely to persist, a solution addressing the problem of low-quality reviews is needed.

In this work, we adopt a mechanism-design approach to tackle two drawbacks of the current system of conference-based publishing in machine learning: (1) a large number of submissions, leading to poor and often arbitrary criteria for desk rejection, and (2) poor incentives for reviewers that make it challenging to find qualified reviewers willing to provide effortful and honest reviews. Our proposed mechanism tackles (1) by requiring authors to participate in an auction for review slots, and tackles (2) via a peer prediction mechanism (involving predicting peers' review scores) that rewards reviewers for honest and effortful reviews. The mechanism involves minimal modification to the current peer review system; it requires authors to submit a bid accompanying each paper, and reviewers to participate in a simple prediction game after submitting their reviews. The paper-reviewer matching process and accept/reject decision process may be independently improved or remain unchanged. Our mechanism would adopt a novel currency usable within a consortium of conferences, with the goal of incentivizing authors to provide the much needed service of peer review in order to earn credits exchangeable for having their own papers reviewed. We outline our proposed process below:
\begin{enumerate}
    \item \textbf{Reviewer Signup}: The conference announces a reviewer signup period open until a few days prior to the submission deadline.
    \item \textbf{Review Slot Announcement}: Based on the number of reviewers signed up, the conference estimates and announces the number of review slots $P$ that can be supported without overburdening reviewers and compromising review quality.
    \item \textbf{Review Slot Auction}: Each submitted paper is accompanied by an author's bid for a review slot. A VCG auction determines which papers get review slots and how much the bidder pays. Authors are not charged for papers that don't receive review slots.
    \item \textbf{Reviewer Matching} (\emph{unchanged}): The conference assigns reviewers to papers accepted for review per the usual approach, ensuring at least 3 well-matched reviewers to a paper.
    \item \textbf{Reviewing}: Each reviewer scores their assigned paper on several criteria and submits their review per the usual process. \emph{Additionally}, they play a simple asynchronous game after all reviews are submitted, involving predicting another reviewer's score. Reviewers are compensated in artificial currency according to their performance in this game.
    \item \textbf{Accept/Reject Decisions} (\emph{unchanged}): Decisions are made  per the standard process.
\end{enumerate}

While our proposal is generally applicable to peer review, our perspective is informed by the conference publication model in machine learning. There remain many questions to be answered before our proposed mechanism can be deployed in practice, and we primarily intend for this work to lay the theoretical foundations for a plausible (but by no means comprehensive) approach to be iterated upon in designing a robust scientific publication process. Our paper is structured as follows: in section 2, we provide some background on academic peer review and motivate the problem; in section 3, we describe our proposed mechanism; in section 4, we present related work; and in Section 5, we conclude with a discussion of some important considerations. 

\paragraph{Main Contributions} We make two main contributions: (1) we propose a peer review mechanism consisting of a submission stage and a reviewing stage, where the revenue raised by the former is used to pay for the latter, and (2) we design the reviewing stage by building on previous work in the information elicitation literature to develop a novel peer prediction mechanism that incentivizes reviewers to provide honest and effortful reviews.

\begin{figure}
    \centering
    \includegraphics[width=\textwidth]{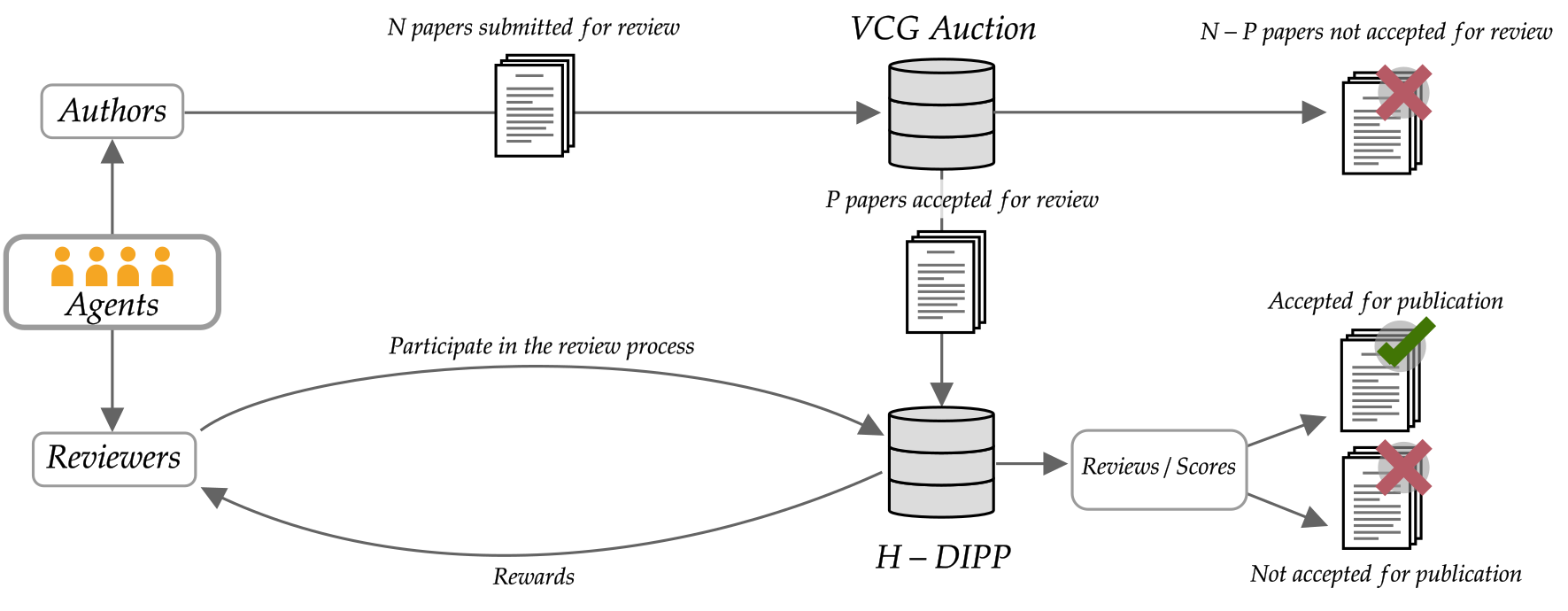}
    \caption{Diagram of proposed mechanism for academic peer review}
    \label{fig:my_label}
\end{figure}

\section{Background}

Although there are some minor variations, the process of peer review can be summarized as follows: (1) authors submit their work to a conference, (2) some papers receive desk rejection, (3) a paper that is not desk rejected is matched with a small number (3-5) of willing reviewers (who do not know the identity of the authors and vice-versa), (4) the reviewers independently score the paper's quality for publication  based on several criteria such as technical correctness, interest to the community, and quality of exposition (the reviewers may or may not know each others' identities), (5) a meta-reviewer aggregates the reviewers' scores and makes an accept/reject decision, incorporating the conference's targeted acceptance rate into their decision.  With this context, we now discuss the often cited primary motivations for and criticisms of peer review below. \citet{merriman2020peer} argues that these `intellectual' motivations tend to be post-hoc rationalizations, and the current process of peer review is `best understood as the product of continuous efforts to steward editors’ scarce attention while preserving an open submission policy that favors authors’ interests'. Part of our goal then is to develop a peer review mechanism that lives up to the stated `intellectual' motivations for the process.

\subsection{Motivations for Peer Review}\label{motivs}

\begin{enumerate}
\item[\bf M1] {\bf Screening Results for Validity}: Peer review serves as a community-issued certificate that the approach and results in a publication seem plausible, replicable, and not fraudulent.

\item[\bf M2] {\bf Screening Results for Importance to the Community}:  The publication record serves as a central repository of curated results that the field considers important or impactful and wishes to focus on; the conference itself serves to build consensus and host debates on the cutting-edge issues in the field.

\item[\bf M3] {\bf Reputation Building/Signaling}: Publication at selective venues also serves to build reputation and advance careers of authors; the idea is that a selective acceptance process that identifies valid and interesting results also identifies individuals who do good work.

\item[\bf M4] {\bf Feedback and Networking}: Peer review serves as a systematic approach to obtaining feedback on both ongoing and completed work, as well as finding collaborators interested in tackling similar problems.
\end{enumerate}

M1 and M2 are often cited as the primary motivations for peer review, and are arguably the most foundational. While M3 is not often cited explicitly as a motivation, it is well-known that publishing frequently at top-tier venues improves career advancement prospects; peer review clearly serves as a mechanism to identify individuals who should be rewarded. M3 derives its legitimacy from M1 and M2. Meanwhile, M4 is an auxiliary benefit derived from M1 and M2. We argue that the peer review system should be judged on its ability to satisfy M1 and M2 while cognizant of its social functions M3 and M4 so as to avoid unintended consequences.
 
\subsection{Criticisms of Peer Review}\label{crit}

The general criticism of peer review is that the reviews are often sub-par and the accept/reject decisions are too noisy. Various NeurIPS experiments \citep{lawrence_2015, shah2018design, neural} provide some empirical basis for this criticism; the 2014 NeurIPS experiment in particular (see Appendix \ref{2014exp}) found that if the conference was re-run, about a quarter of papers would have their accept/reject decisions swapped, and only about half of accepted papers would be re-accepted. There is also a perception that reviewers are often misled by unsubstantiated claims \citep{lipton2018troubling} or work that \emph{looks} important but isn't \citep{vazire2017quality}. This may incentivize the submission of premature, lower quality work which if rejected, may be resubmitted at no cost to future conferences with minor tweaks with the goal of getting through a noisy review process \citep{bengio_2020}.  In this context, we discuss some of the explanations for why the current system of peer review within machine learning may be sub-par, although our discussion may apply to peer review in general.

\begin{enumerate}
\item[\bf C1] {\bf The Extensive Margin of Reviewing}: There are too many submissions and not enough reviewers, so the current system struggles to match submissions with qualified and interested reviewers. There are no explicit incentives for reviewers, so reviewers participate entirely for pro-social reasons. The shortage of reviewers has been exacerbated by the recent explosion in submissions to top conference venues in machine learning over the past decade. The large number of submissions can also lead to desk-rejection policies that can be somewhat opaque and arbitrary. Poor matching due to a scarcity of reviewers could explain why reviews are often sub-par, so a mechanism that incentivizes reviewers to participate could be valuable.

    \item[\bf  C2] {\bf The Intensive Margin of Reviewing}: Even when reviewers participate voluntarily, they may \emph{not be willing to exert enough effort} to provide a high-quality assessment of the paper, and there is currently no explicit incentive to do so. In an attempt to redress the shortage of reviewers, some conferences require authors to also participate in review. While this would increase the supply of reviewers, it could potentially crowd-out pro-social motivations and worsen the average quality of reviews without proper incentives. Thus, we may need to provide the appropriate incentives to elicit effortful reviews, especially when requiring participation in the review process.
    
     \item[\bf C3] {\bf Dishonest Reviews}: Some have noted that even when reviewers exert enough effort to understand a paper, they may not report their honest opinion for various reasons. Thus, it is desirable to design the review process to remove obstacles to truthful reviewing.
    
    \item[\bf C4] {\bf Ability of Reviewers}: Another strain of criticism is that peer review is fundamentally flawed and \emph{it cannot work} even when reviewers provide honest and effortful reviews on well-matched papers.  There are two versions of this criticism: the first is that reviewing is too subjective for reviewers to
    agree on results' validity and importance to the community, and the second is that even when reviewers agree, their reviews do not track any meaningful measure of importance to the community. Empirical evidence from the 2014 NeurIPS experiment (where all the aforementioned incentive issues persist) suggests that reviewers' scores correlate moderately well with each other as well as with log-citation counts; it may be quite possible to improve upon these results by remedying incentive issues.
\end{enumerate}

Our focus in this paper is to address the first three criticisms head on with a mechanism-design approach. On the paper submission side of the process, our proposed mechanism disincentivizes lower-quality submissions aimed at getting through a noisy review process. On the paper reviewing side, our proposed mechanism incentivizes reviewers to participate in the peer review process, exert effort, and report their assessments truthfully. Our mechanism is agnostic to the reviewer-paper matching process, and we discuss this in Appendix \ref{demsup}. 
The key challenge we face is the lack of a ground-truth verification for the submitted reviews, which makes it difficult to verify whether reviewers' assessments are truthful and effortful. We tackle this issue by building on recently proposed mechanisms from the information elicitation without verification (IEWV) literature.

\section{A Mechanism for Academic Peer Review}

We now describe our peer review mechanism that incentivizes  authors to submit papers they believe are likely to be accepted, and reviewers to participate, exert effort, and honestly review papers. Stage I deals with the paper submission process, where authors bid for review slots, and Stage II deals with the paper review process, where reviewers are paid based on the quality of their reviews. The paper-reviewer matching process and accept/reject decision process may remain unchanged. We envision the mechanism using its own currency (henceforth referred to simply as `credits') that can be utilized within a consortium of conferences.

\paragraph{Summary} In the Stage I submission mechanism, authors participate in a VCG auction for review slots,
with one author on each paper $\omega_i$ also submitting an accompanying bid $b_i$ (in mechanism-specific credits). The winners of the auction pay the highest losing bid and have their papers  \emph{accepted for review}. In the Stage II reviewing mechanism (which we call H-DIPP), every paper $\omega_i$ accepted for review is assigned at least 3 reviewers. Reviewers proceed  per the usual peer review process, supplying reviews and scoring the paper on a common, fixed set of $T$ criteria (e.g. novelty, correctness, interest to the community, quality of writing). The reviewers will then also  participate in a simple peer prediction game to report \emph{predictions} of the scores of one other peer, twice for each criterion; reviewers first predict the peer's review score on a given criterion, are shown the third peer's score on that criterion, and are asked to update their prediction. This proceeds sequentially for each criterion in a  pre-determined order. Reviewers' predictions of a peer's review scores are rewarded based on their accuracy (with the log proper scoring rule), and reviewers' review scores on each criterion are rewarded based on how much it aids another peer in their prediction game.  We prove that under such a reward scheme, reviewers maximize their expected payoff when they provide honest and effortful reviews (when everyone else also does so). We emphasize that the prediction game involves predicting other reviewers' review scores, and not something like the paper's expected impact. The revenue raised from the submission stage auction can be used to pay reviewers for their reviews in the reviewing stage. 
Decisions on which papers are \emph{accepted for publication} can be made per the usual process.  Authors who submitted bids must ensure that they have (or can earn) enough credits to cover their auction bids, or risk having their reviews and potential acceptance withheld. In general, authors face the budget constraint of submitting bids no larger than the value of credits they are willing and able to earn through reviewing. Authors who provide the valuable and scarce service of reviewing will be able to earn and bid more credits, providing a positive incentive to participate in the review process.

\begin{remark}
We note that there are two simple ways to incentivize authors to earn sufficient credits via reviewing to cover their auction bid. The first is to refuse to accept papers through the auction without sufficient store of reviewing credit; the second is to  model the conference submission process as a repeated game where an author can either save credits or be indebted credits from round to round. We do not assume that every author is a reviewer for the same conference, or vice-versa.
\end{remark}

\subsection{Setup}

\paragraph{Notation} Suppose we have $N$ papers submitted to the conference peer review mechanism denoted $\omega_i$ (for $i \in [N]$ ) with $K$ participating agents. We use the index $i$ when referring to papers in general and the index $k$ in referring to agents in general. We represent the papers submitted by agent $k$ as $\{\omega_{i_s}^{k}\}$ and the papers reviewed by an agent $k$ as $\{\omega_{i_r}^{k}\}$ with the natural condition that $\{\omega_{i_s}^{k}\} \cap \{\omega_{i_r}^{k}\} = \emptyset$. Author $k$'s Stage I bid accompanying paper $\omega_i$ is $b_i$. Reviewer $k$'s scores of paper $\omega_i$ on the $T$ criteria are written as $({\bf x}_k^{[T]})_i$, their first predictions of peer $k'$'s reported scores as $(\hat{\bf p}_{k'^{[T]} \leftarrow k})_i$, and their second predictions of peer $k'$'s reported scores as $(\hat{\bf p}^+_{k'^{[T]} \leftarrow k}
)_i$. When reviewer $k$ exerts enough effort to receive informed private signals on $t_k$ of $T$ criteria, the cost of the effort is $e_k(t_k)$. A complete table of notation is presented in Appendix \ref{notapp}.

\paragraph{Mechanism Design Background} We define an agent $k$'s \emph{strategy} as a map from their private information to a distribution over reports elicited by a mechanism. An agent's \emph{submission strategy} consists of their bids for all their submitted papers, denoted $s_k^{(1)} = \left(b_i \right)_{i \in \{\omega_{i_s}^{k}\}}$. An agent's \emph{reviewing strategy} consists of the effort  they exerted to arrive at an informed review score on $t_k$ criteria, reported scores on all $T$ criteria and their two predictions of a peer $k'$s score on every criterion for every paper they review, denoted $s_k^{(2)} = \left( \left\{e_k(t_k), \hat{\bf x}_k^{[T]}, \hat{\bf p}_{k'^{[T]} \leftarrow k}, \hat{\bf p}_{k'^{[T]} \leftarrow k}\right\}_{i}\right)_{i\in \{\omega_{i_r}^{k}\}}$. 
A \emph{strategy profile} is a tuple of every agent's strategy ${\bf s} = (s_1, s_2, \ldots, s_K)$. The strategy profile of all agents excluding agent $k$ is ${\bf s}_{-k}$.

\begin{definition}[Dominant Strategy]
A strategy $s_k$ is dominant if the strategy profile $(s_k, {\bf s}_{-k})$ maximizes agent $k$'s payoff for any strategy profile ${\bf s}_{-k}$.
\end{definition}

\begin{definition}[(Strict) Bayesian Nash Equilibrium (BNE)]
Suppose player $k$ observes private signal $x_k$. A strategy profile ${\bf s} = (s_1(x_1), s_2(x_2), \ldots, s_K(x_K))$ is a Bayesian Nash Equilibrium if no agent $k$ can increase their expected payoff by deviating from strategy $s_k(\cdot)$ when the other agents play ${\bf s}_{-k}$, for any private signal $x_k$. The BNE is strict if deviating from $s_k$ decreases agent $k$'s expected payoff.  If the strategy profile of the BNE consists of identical strategies for all agents, the BNE is symmetric.
\end{definition}

\paragraph{Utility Function} We specify the utility function $U^{(1)}_k$ of an author agent $k$ participating in the submission mechanism (Stage I)  as follows:
\begin{equation}\label{eq:util1}
    U_k^{(1)} = \begin{cases} \sum_{i \in \{\omega_{i_s}^{k}\}} (v_k\delta_i - c_i) & \text{accepted for review} \\ 0 & \text{reject without  review} \end{cases}
\end{equation}
where $\delta_i$ is a random variable indicating whether or not the paper is accepted \emph{for publication}, $v_k$ is the agent's value in credits for a paper accepted for publication and $c_i$ is the price charged by the VCG auction (Stage I) for a review slot.
We can similarly specify the utility function $U_k^{(2)}$ of a reviewing agent $k$ participating in the H-DIPP reviewing mechanism (Stage II) as follows:
\begin{equation}\label{eq:exputil2}
    U_k^{(2)} = \sum_{i\in \{\omega_{i_r}^{k}\}}  r_k\left( \left[s_k^{(2)}\right]_i, \left[{\bf s}_{-k}^{(2)}\right]_i\right) - \left({e}_{k}(t_k)\right)_i 
\end{equation}
where $r_k$ is the reward credits paid to reviewer $k$ by the H-DIPP mechanism for a given strategy $[s_k^{(2)}]_i$ on paper $\omega_i$. 
We analyze the incentives and strategy profiles of the two stages separately. 
With this setup, we now present the submission and reviewing stages of the mechanism.




\subsection{Stage I: Paper Submission} \label{stage1}
Our Stage I mechanism  uses a VCG auction to allocate review slots and charge winners the threshold bid necessary for their paper to be reviewed. We assume that the number of review slots $P$ is fixed and announced by the conference before paper submission deadline, based on how many reviews it expects to be able to pay for (perhaps with some predictive model using historical data, with some leeway to run a deficit). We defer a discussion of how the number of review slots $P$ may be determined dynamically to Appendix \ref{demsup}, along with the associated considerations and challenges. The auction asks authors to submit a bid $b_i$ along with every paper $\omega_i$ indicating how much they are willing to pay for a review slot for that paper. 
Authors can observe the quality of their paper $\omega_i$ and arrive at a belief that the paper will be accepted for publication with probability $\eta_i$ conditional on review under the equilibrium of the submission mechanism\footnote{In other words, authors should reason about their paper's acceptance probability in a pool with a particular combination of $P$ of the `the best papers' and papers from prolific reviewers who can bid higher since they review more. This would be the equilibrium of the submission mechanism; if the pool had some other combination of papers, a stronger paper or a prolific reviewer with larger store of credits would be willing to bid higher to displace a paper in the pool with their paper. We are assuming that by attending conferences and participating in review, authors have some notion of how competitive their paper will be in relative terms.} (\emph{bids} have no bearing on the chance of acceptance for publication). Hence, denoting reviewer $k$'s value of a paper accepted for publication as $v_k$, the expected value of a review slot is $\mathds{E}[v_k \delta_i] = v_k\eta_i$.
We also have the following natural conditions: (1) the review slots are identical\footnote{Even though from the reviewing side, the review slots are heterogeneous (papers will differ in terms of cost of finding well-matched reviewers), we treat the slots as identical from the author's side in the interest of fairness. This could be relaxed by splitting the conference into `tracks' within which this treatment is more justifiable.}, (2) each paper may only receive a single review slot, and (3) package bids are not allowed, i.e., authors cannot make a single bid for multiple review slots for multiple papers, but must instead make a separate bid for each slot for each paper. In this setting, it is easy and practical to identify the winners of the auction when a single author can submit multiple papers for review. 

After collecting bids, the VCG auction allocates the $P$ review slots to the $P$ papers with the highest bids, denoted $ \{\omega^*_1, \ldots , \omega^*_P\}$, with associated bids $\{b_p^*\}_{p\in [P]}$. Denoting the paper with the $P+1$-highest bid ${b}^*_{{P+1}}$ as ${\omega}^*_{P+1}$, the mechanism charges all authors of the papers with winning bids a price of $c_i = \sum_{\ell = 1, \ell \neq i}^{P+1} {b}^*_\ell - \sum_{\ell=1, \ell\neq i}^{P} {b}^*_\ell = {b}^*_{P+1}$. Thus, all papers accepted for review are charged the highest losing bid\footnote{This is correct when viewing papers-as-bidders, but a potential concern when viewing authors-as-bidders is that authors who submit multiple papers may try to bid strategically. Strictly speaking, VCG would require charging author $k$ who gets $p_k$ slots the sum of the $p_k$ highest losing bids excluding their own. In practice, this is unlikely to matter since the number of review slots vastly outnumbers an author's submissions.}, and accepting these papers maximizes author welfare for a given number of review slots $P$. 

Truthful bidding is a \emph{weakly dominant strategy} for VCG, and we expect authors' bids to equal to their value for having their paper reviewed. The natural assumption of identical review slots and prohibition of package bids precludes the need to solve a computationally challenging knapsack problem to identify winners. %
Thus, the \emph{Stage I strategy profile} that maximizes aggregate author welfare under a given number of review slots $P$ is ${\bf s}^{(1)} = \left(\left(\{v_k\eta_i \}_{i \in \{\omega_{i_s}^{k}\}}\right)_{k = 1,\ldots,K}\right)$. The mechanism allocates review slots to the highest bidding papers, which will generally be papers that are likely to be accepted for publication and/or have authors who generate a lot of value as reviewers. Thus, the papers accepted for review should also ideally be of higher quality if the review process screens well for quality. This should also eliminate the need for desk rejection.

\subsection{Stage II: Peer Review}\label{stage2}
We now present the second stage of the mechanism which focuses on incentivizing honest and effortful reviews of the $P$ papers accepted by the VCG auction. The context here is as follows: the bid requirement for the submission process in Stage I auction creates demand for credits, and since this can only be earned by reviewing, it incentivizes researchers to join the reviewer pool. While this is desirable, we must structure the payment for reviews carefully. A flat fee to reviewers could result in reviewers putting in minimal effort and getting paid anyway; if credits could be earned with little effort, bids could be made arbitrarily high, defeating the point of the mechanism. Instead, we need to structure the payment such that honest and effortful reviews are paid more than low-effort or false reviews. This is a challenging problem in the absence of ground truth verification; and our proposed H-DIPP mechanism is a novel solution to this problem, drawing on insights from the Target Differential Peer Prediction (TDPP) \citep{schoenebeck2020two} and Hierarchical Mutual Information Paradigm (HMIP) \citep{kong2018eliciting} (see Appendix \ref{relsec}). The main technical challenge here involves determining what information to reveal during the prediction game, in what order, and how to structure the payments. To our knowledge, our design is the only approach that provides the desired equilibrium results. Note that the H-DIPP mechanism stage describes only the reviewing process for a single paper; we are agnostic about the paper-reviewer matching process, and meta-reviewers can make final accept/reject decisions as usual. We present some fundamentals of the Mutual Information Paradigm \citep{kong2019information} used to develop this theory in Appendix \ref{mipsec}. Proofs are presented in Appendix \ref{proofs}. We now give a more detailed discussion of our reviewing mechanism H-DIPP (Mechanism \ref{algo:duplicate2}).

\subsubsection{Model Preliminaries}

Each reviewer rates the paper $\omega_i$ on a common set of $T$ criteria. Each criterion $t$ is rated on a $D_t$-point scale, i.e., reviewers' reported scores can take on any value in a $\Sigma_t = \{1, 2, \ldots, D_t\} $. The same H-DIPP mechanism will be used determine payments for every paper's reviewers, so we drop the index $i$ referring to a specific paper going forward. Now, suppose we have a sample space $\Omega$ of all the possible realizations of papers (`world states') a reviewer may see. A reviewer $k$'s \emph{private review score} on criterion $t$ is the random variable ${X}^t_k: \Omega \rightarrow \Sigma_t$ describing their honest private assessment (or `signal') of the paper on that criterion, \emph{when they exert the necessary effort to arrive at an informed opinion}. A reviewer's joint distribution of such `effort-informed review scores' for all criteria is represented by the multivariate random variable ${\bf X}_k^{[T]} = (X_k^1, X_k^2, \ldots, X_k^T) \in \Sigma_1 \times \Sigma_2 \times \ldots \times \Sigma_T$, and the full joint distribution of effort-informed review scores over all criteria for all reviewers is $\mathds{P}({\bf X}_A^{[T]}, {\bf X}_B^{[T]}, {\bf X}_{C}^{[T]})$. We assume that this joint distribution is common knowledge and  has full support, i.e., every assignment of outcomes has non-zero probability. This is a standard basic assumption in our setting, and necessary in order to make any claims about agents' beliefs about the papers they're reviewing. We discuss how this assumption can be made more realistic in Section \ref{disc}, and other assumptions in  Appendix \ref{otherass}.  
We state this assumption formally below:
  
   \begin{assumption}[Common Prior with Full Support] \label{comprior}
  Reviewers' effort-informed signals on all criteria $\{{\textbf x}_A^{[T]}, {\textbf x}_B^{[T]}, {\textbf x}_C^{[T]}\}$ are drawn from the common prior $\mathds{P}({\bf X}_A^{[T]}, {\bf X}_B^{[T]}, {\bf X}_{C}^{[T]})$, and the common prior is common knowledge. The common prior also has full support.
 \end{assumption}
 
  \paragraph{Criteria Hierarchy} Next, we construct a common \emph{hierarchy of criteria} that each reviewer will rate the paper on. The idea is structure the hierarchy so 
  such that reviewers require less effort to come to an informed private assessment on `lower' level criteria than `higher' level criteria, 
  i.e., the total effort required to arrive at informed private assessments increases up the hierarchy. Thus, we say $t_i > t_j$ if scoring the paper on criterion $t_i$ requires strictly more effort than providing a score on criterion $t_j$. When a reviewer has exerted enough effort to arrive at an informed review score  $x_k^{t}$ on criterion $t$, we say that reviewer has \emph{completed} criteria $[t_k] = \{1, \ldots, t_k\}$. The idea is to explicitly construct lower level criteria to collect the cheap signals and pay agents more for the information \emph{added} by valuable higher level criteria. We can define the positive, {increasing} function $e_k: [T] \rightarrow \mathds{R}^+$ specifying reviewer $k$'s \emph{total effort cost} (in the mechanism-specific credits) to obtain effort-informed signals up to criterion $t$, where we assume $e_k(t) > e_k({t'})$ if $t > t'$. 

 In the current peer review setting at machine learning conferences, reviewers are already required to report scores on a variety of criteria, so we simply require a particular structuring of existing criteria. As an example, one possible hierarchy of criteria could be the following (where responses fall on a 5-point scale): 1) \emph{(presentation)} ``How clear is the writing/exposition?''; 2) \emph{(completeness)} ``How easy would it be to reproduce major results?''; 3) \emph{(correctness)} ``How accurate are the technical claims and methodology?''; 4) \emph{(contribution)} ``How interesting and valuable of a contribution is the work? ''; 5) \emph{(overall quality)} ``In which quintile of submitted papers would this work fall?"

 \begin{figure}[h]
    \centering
    \includegraphics[width=\textwidth, height=55mm]{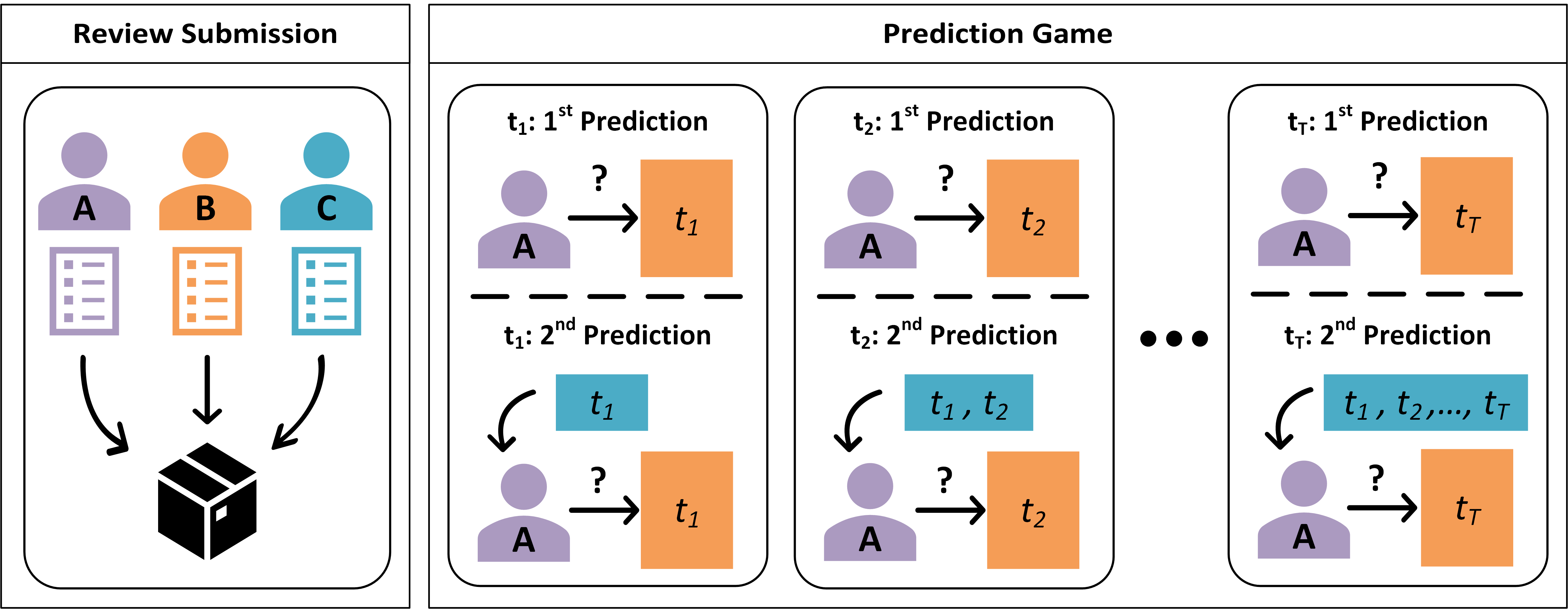}
    \caption{An illustration of the H-DIPP mechanism from reviewer $A$'s perspective, when reviewer $B$ is their assigned target and reviewer $C$ is their assigned source.}
    \label{fig:ill}
\end{figure}

  \subsubsection{Mechanism and Payments} 
 Having detailed our assumptions, we now describe exactly how the mechanism collects reports and computes the payment. The mechanism is summarized as Mechanism \ref{algo:duplicate2} and illustrated in Figure \ref{fig:ill}. We give a step-by-step example in Appendix \ref{app:add}.
 
 \paragraph{Reports} First, each reviewer reports their review scores to the mechanism per the usual process. Reviewers arrive at private review scores ${\bf x}^{[t_k]}_k$ on $t_k$ criteria by exerting $e_k(t_k)$ effort, and report this to the mechanism (potentially strategically) as $\hat{\bf x}^{[T]}_k \sim  \theta_k(\cdot | {\bf x}^{[t_k]}_k)$. 
 
 Once all the reviews are in, every reviewer is matched with two others and (independently) plays a simple prediction game. 
 Suppose reviewer $A$ plays the game and is matched with reviewers $B, C$. 
 The mechanism will first elicit the reviewer $A$'s prediction $\hat{\bf p}_{B^1 \leftarrow A} = P\left(\hat{X}_B^1|{\bf x}_A^{[t_A]}\right)$ of the reviewer $B$'s score on criterion $1$. Then, it reveals reviewer $C$'s score $\hat{x}_C^1$ on that criterion and asks for an updated prediction from reviewer $A$: $\hat{\bf p}^+_{B^1 \leftarrow A} = P\left(\hat{X}_B^1|{\bf x}_A^{[t_A]}, \hat{\bf x}_C^{1}\right)$. The mechanism repeats this for criterion $2$, $3$, $\ldots$, and so on 
 until the $T$th criterion. The mechanism thus requests $3T$ reports: reviewers' $T$ review scores, their first predictions of their assigned target's scores on $T$ criteria, and updated second predictions of that target peer's scores on $T$ criteria given the source peer's scores. A reviewer's strategy on the reviewing task is $s_k = \left(e_k(t_k),\hat{\bf x}_k^{[T]}, \hat{\bf p}_{k'^t \leftarrow k},\hat{\bf p}^+_{k'^t \leftarrow k}\right)$.

\begin{algorithm}[h]
\SetAlgorithmName{Mechanism}{}
\DontPrintSemicolon
\KwIn{$\omega_i$ (paper under review), Reviewer $A$ matched with reviewers $B$ and $C$, $\{t_1, t_2, \ldots, t_T \}$ (hierarchical criteria),  $\{  \boldsymbol{\alpha}_A^{[T]}, \boldsymbol{\beta}_A^{[T]}\}$ (mechanism hyperparameters)}
\KwOut{$r_A$ (payments to reviewer $A$) } 

Elicit reviewer $A$'s scores on each criterion $\{\hat{\bf x}_A^{[T]}\}$ per the usual peer review process \\
\For{each criterion $t$ ascending the criteria hierarchy}{

Elicit reviewer $A$'s predictions of reviewer $B$'s score on criterion $t$ ($\hat{\bf p}_{B^t\leftarrow A}$). \\

Reveal reviewer $C$'s score  $\hat{x}_C^{t}$ on criterion $t$ to reviewer $A$. \\

  Elicit reviewer $A$'s \emph{updated} prediction of reviewer $B$'s score on criterion $t$ ($\hat{\bf p}^+_{B^t\leftarrow A}$).\\
  
  $r_{A^t}^\text{expert} \gets {S_\text{log}(\hat{x}_B^t, \hat{\mathbf p}_{B^t \leftarrow A}) + S_\text{log}(\hat{x}_B^t, \hat{\mathbf p}^+_{B^t \leftarrow A})} $ \\
  
 $ r_{A^t}^\text{target} \gets S_\text{log}(\hat{x}_A^t, \hat{\mathbf p}_{A^t \leftarrow C}^+) - S_\text{log}(\hat{x}_A^t, \hat{\mathbf p}_{A^t \leftarrow C}) $
%
}
$ r_A \gets \sum_{t=1}^T \alpha_A^t r_{A^t}^\text{expert} + \beta_A^t r_{A^t}^\text{target} $
\caption{H-DIPP}
\label{algo:duplicate2}
\end{algorithm}

 \paragraph{Payments} The mechanism will pay each reviewer for their roles as expert and target (source payment is 0). Reviewers receive `expert payment' based on their performance in the prediction game, and receive `target payment' based on how reliably their scores relate to a peer's \citep{schoenebeck2020two}. 
 With mechanism hyperparameters $\{\boldsymbol{\alpha}_A^{[T]}, \boldsymbol{\beta}_A^{[T]}\}$, the criterion-$t$ specific payment to reviewer $A$ using the log scoring rule $S_\text{log}$ is:  
 
\begin{equation} \label{rew} \small
    r_{A^t} = \alpha_A^t\left(\underbrace{S_\text{log}(\hat{x}_B^t, \hat{\bf p}_{B^t \leftarrow A}) + S_\text{log}(\hat{x}_B^t, \hat{\bf p}^+_{B^t \leftarrow A})}_{\text{expert payment}}\right) + \beta_A^t\left(\underbrace{S_\text{log}(\hat{x}_A^t, \hat{\bf p}_{A^t \leftarrow C}^+) - S_\text{log}(\hat{x}_A^t, \hat{\bf p}_{A^t \leftarrow C})}_{\text{target payment}} \right)
\end{equation}

A reviewer's \emph{expert payments} depend on how good their predictions were during the prediction game (incentivizing truthful predictions). A reviewer's \emph{target payments} depend how much better a peer expert's second prediction of their reported review score was over their first;  the intuition is that their peer's second prediction can only update reliably towards the target reviewer's true review score (thus incentivizing truthful reviews), given a third reviewer's score (see Proof of Theorem 1 in Appendix \ref{proofs}).  The mechanism hyper-parameters are used to appropriately weight the criterion-specific payments to cover the marginal effort costs of completing higher level criteria (see Theorem \ref{optparam} in Appendix \ref{app:add}). To compute reviewer $A$'s total payment, we simply sum the individual criterion-specific payments $r_A = \sum_t
 r_{A^t}$. 
 

 \subsubsection{Equilibria}
 
We now analyze the equilibria of our mechanism\footnote{We ignore permutation equilibria, where the labels for signals are permuted. Such equilibria exist in our mechanism, but they are not of serious concern; see \citet{kong2019information} for details.}. First, we give the strict truthfulness of our mechanism when all reviewers complete a certain number of criteria.
 
\begin{theorem}[Strict Truthfulness]\label{truththm}
Suppose reviewers $A, B, C$ complete $[t_A], [t_B]$, and $[t_C]$ criteria respectively. Then, on shared complete criteria $[t'] = [\min(t_A, t_B, t_C)]$, the H-DIPP mechanism is strictly truthful, i.e., all reviewers honestly reporting their true predictions and private review scores on shared complete criteria $[t']$ is a strict Bayesian Nash equilibrium.
 \end{theorem}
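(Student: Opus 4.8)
The plan is to decompose reviewer $A$'s payment $r_A = \sum_t \alpha_A^t r_{A^t}^{\text{expert}} + \beta_A^t r_{A^t}^{\text{target}}$ and argue truthfulness term by term, then stitch the per-criterion arguments together across the hierarchy up to level $[t']$. The key structural observation is that a reviewer's \emph{reports} enter the mechanism in two distinct roles --- as predictions (in the expert payment of their own payoff) and as reported review scores (in the target payment of a \emph{peer's} payoff, fed back to $A$ only through $C$'s revealed score in $A$'s own prediction game). So I would fix the strategy profile in which all of $B$, $C$ report truthfully on $[t']$, and show $A$'s best response on $[t']$ is also truthful, handling the two channels separately.

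First, the expert-payment channel. Conditioned on any fixed reporting strategy of $A$'s review scores, the terms $S_\text{log}(\hat x_B^t, \hat{\mathbf p}_{B^t\leftarrow A})$ and $S_\text{log}(\hat x_B^t, \hat{\mathbf p}^+_{B^t\leftarrow A})$ are standard strictly proper scoring rules evaluated against the realized label $\hat x_B^t$. Since $B$ reports truthfully, $\hat x_B^t = x_B^t$, and the expectation of the log score over $X_B^t$ is strictly maximized by reporting the true posterior. Here I must be careful to use the \emph{correct conditioning information}: the first prediction is scored as proper given $A$'s information set $\mathbf x_A^{[t_A]}$, the second given $\mathbf x_A^{[t_A]}$ together with the revealed $\hat x_C^t = x_C^t$ (truthful by hypothesis). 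By Assumption~\ref{comprior} (common prior, full support) these conditional distributions are well-defined and the posteriors $P(X_B^t \mid \mathbf x_A^{[t']}\!)$ and $P(X_B^t\mid \mathbf x_A^{[t']}, x_C^t)$ are the unique maximizers --- so honest prediction is a strict best response for every criterion $t \le t'$, and this holds \emph{pointwise} in $A$'s reported review scores, hence separately optimizing the expert channel costs nothing.

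Second, the target-payment channel and the coupling. The subtlety is that $A$'s reported review score $\hat x_A^t$ affects $A$'s payoff only via $C$'s target payment $S_\text{log}(\hat x_C^t,\hat{\mathbf p}^+_{C^t\leftarrow A}) - S_\text{log}(\hat x_C^t, \hat{\mathbf p}_{C^t\leftarrow A})$? --- no: rereading the mechanism, $A$'s target payment is scored against $A$'s \emph{own} reported score $\hat x_A^t$ using peer $C$'s two predictions $\hat{\mathbf p}_{A^t\leftarrow C}, \hat{\mathbf p}^+_{A^t\leftarrow C}$, which (since $C$ is truthful) equal $P(X_A^t\mid \mathbf x_C^{[t']})$ and $P(X_A^t \mid \mathbf x_C^{[t']}, x_A^t)$ --- and crucially the \emph{second} of these is conditioned on the label $A$ reports. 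This is the TDPP-style ``differential'' trick: $\mathds E\big[S_\text{log}(\hat x_A^t, P(X_A^t\mid\mathbf x_C^{[t']}, \hat x_A^t)) - S_\text{log}(\hat x_A^t, P(X_A^t\mid \mathbf x_C^{[t']}))\big]$, where the outer expectation is over $A$'s belief about $X_C$'s information given $A$'s true signal, is a Bregman-type divergence that is strictly maximized when the revealed label is $A$'s honest signal; the positivity of the log-scoring-rule gap is exactly the statement that $C$'s second prediction can only move reliably toward $A$'s true score. I would invoke (or re-derive via the MIP / mutual-information identity of \citet{kong2019information, schoenebeck2020two} cited for the payment design) that the expected differential log score equals the conditional mutual information $I(X_A^t; X_A^t \mid \mathbf X_C^{[t']})$-type quantity maximized at truth, using full support to guarantee strictness. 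Combining: for each $t \le t'$, honest prediction and honest score reporting jointly strictly dominate any deviation, and since $r_A$ is a nonnegative-weighted sum ($\alpha_A^t, \beta_A^t > 0$) of these per-criterion rewards, the strict inequality survives summation; no deviation on any criterion in $[t']$ can help. Hence truth-telling on $[t']$ is a strict BNE.

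The main obstacle I expect is \textbf{the target-payment argument for the review scores}, specifically proving that the expected \emph{differential} log score is strictly maximized at the honest report --- this is where the particular design (revealing $C$'s score \emph{between} the two predictions, ascending the hierarchy so that information sets are nested) does the work, and it requires the stochastic-relevance / full-support structure to rule out profitable misreports that happen to make $C$'s prediction shift ``by the right amount.'' A secondary wrinkle is bookkeeping around the information sets: $A$ may have completed more criteria than $t'$, so I must verify that conditioning on the larger set $\mathbf x_A^{[t_A]}$ rather than $\mathbf x_A^{[t']}$ does not break strict properness (it does not --- proper scoring rules are proper for \emph{any} information set), and that the hierarchy's nestedness is what licenses treating each criterion's prediction update as a clean posterior update. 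I would also note explicitly that this theorem says nothing about \emph{effort} (it conditions on the completed sets $[t_A],[t_B],[t_C]$); incentivizing $t_k$ itself is deferred to the hyperparameter tuning of Theorem~\ref{optparam}.
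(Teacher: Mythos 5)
Your overall architecture --- decompose $r_{A^t}$ into the expert and target channels, fix $B,C$ truthful, invoke strict properness of the log score for the expert channel, and a differential/data-processing argument for the target channel --- is exactly the paper's route, and your expert-channel argument is essentially correct (one bookkeeping slip: the truthful \emph{first} prediction on criterion $t$ conditions on ${\bf x}_A^{[t_A]}$ \emph{and} the already-revealed ${\bf x}_C^{[t-1]}$ from earlier rounds, not on $A$'s signals alone). But your treatment of the target channel, which you yourself flag as the crux, rests on a misreading of the mechanism. You write that $C$'s two predictions of $A$'s score are $P(X_A^t\mid {\bf x}_C^{[t']})$ and $P(X_A^t\mid {\bf x}_C^{[t']}, x_A^t)$, i.e.\ that the second prediction is conditioned on the label $A$ reports. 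It is not: when $C$ plays expert with target $A$, the source whose score is revealed between $C$'s two predictions is $B$, so the truthful predictions are $\mathds{P}(X_A^t\mid {\bf x}_C^{[t_C]}, {\bf x}_B^{[t-1]})$ and $\mathds{P}(X_A^t\mid {\bf x}_C^{[t_C]}, {\bf x}_B^{[t]})$. Under your reading the second log score would be identically zero (a point mass on the revealed label) and the differential would reduce to $-S_\text{log}(\hat x_A^t, P(X_A^t\mid{\bf x}_C^{[t']}))$, which $A$ would maximize by reporting the \emph{least} likely score under $C$'s posterior --- the incentive is inverted, so the ``Bregman-type divergence maximized at truth'' claim does not go through as you set it up.

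The step you are missing is the Bayes-rule flip that makes the data-processing argument land. The paper writes $A$'s expected target payment under a deterministic reporting strategy $\theta$ as $\mathds{E}_{\mathds{Q}}\bigl[\log \mathds{P}(\hat x_A^t\mid {\bf x}_C^{[t_C]},{\bf x}_B^{[t]}) - \log\mathds{P}(\hat x_A^t\mid {\bf x}_C^{[t_C]},{\bf x}_B^{[t-1]})\bigr]$ and rewrites it as $\mathds{E}_{\mathds{P}}\bigl[\log\bigl(\mathds{P}(x_B^t\mid {\bf x}_C^{[t_C]},{\bf x}_B^{[t-1]},\theta_{A^t}({\bf x}_A^{[t_A]}))/\mathds{P}(x_B^t\mid {\bf x}_C^{[t_C]},{\bf x}_B^{[t-1]})\bigr)\bigr]$: the report $\hat x_A^t$ now sits in the conditioning, predicting $B$'s signal. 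The gap between truth-telling and $\theta$ is then an expected KL divergence $\mathds{E}\bigl[\text{KL}\bigl(\mathds{P}(x_B^t\mid\cdots,x_A^t)\,\|\,\mathds{P}(x_B^t\mid\cdots,\theta_{A^t}({\bf x}_A^{[t_A]}))\bigr)\bigr]\ge 0$, with strictness supplied by second-order stochastic relevance (Assumption \ref{stoch}), not merely full support. The correct intuition is that $A$'s reported score is rewarded for how much it helps predict the \emph{source} $B$'s signal, and garbling $A$'s true signal can only hurt; once you repair the identification of the source in $C$'s prediction game, your outline completes along the paper's lines.
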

 
However, mere truthfulness is not enough; the mechanism must also incentivize reviewers to exert the necessary effort to complete all criteria. After all, we care about eliciting reviewers' \emph{effort-informed signals} on all criteria. To achieve this, we reason about reviewers' marginal payments and costs of effort to determine mechanism hyperparameters in Appendix \ref{app:add}. 

 Note that a reviewer's expected expert payment is always non-positive (since the Shannon entropy is non-negative), and a reviewer's expected target payment is always non-negative. We must ensure that a reviewer's expected payment is greater than their cost in the fully informative equilibrium, so that it is individually rational for reviewers to participate in the mechanism. We can achieve this by  scaling the hyperparameters such that the target payment pays more than the expert payment subtracts. This also grants that the fully-informative equilibrium pays more than an uninformative equilibrium where no reviewers exert effort. We state this formally below.
 
 \begin{theorem}[Uninformative Equilibria and Individual Rationality]\label{ir} \phantom{=}
 
 \begin{enumerate}[label=(\alph*)]
     \item In the highest paying uninformative equilibrium (where all reviewers exert zero effort), every reviewer's expected payment (and hence aggregate reviewer welfare) is zero.

     \item 
     Suppose we set mechanism hyperparameters so reviewers' expected utility in the fully-informative equilibrium is positive (see Theorem \ref{optparam} in Appendix \ref{app:add}). Then, it is individually rational to participate in the mechanism and complete all criteria and report truthfully, given peers are doing the same. Additionally, the fully-informative equilibrium has higher individual and aggregate utility than the uninformed equilibrium.
 \end{enumerate}
 \end{theorem}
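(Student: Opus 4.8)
The plan is to dispatch the two parts in order, with part (a) carrying essentially all of the work and part (b) falling out of part (a) together with Theorems \ref{truththm} and \ref{optparam}.

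\emph{Part (a).} First I would pin down the structure of an arbitrary uninformative equilibrium. Since no reviewer exerts effort, no reviewer holds a signal correlated with the world state, so each reviewer's reports---review scores, first predictions, and updated predictions---are functions of that reviewer's own private randomization only; in particular the three reported review scores $\hat X_A^t,\hat X_B^t,\hat X_C^t$ are mutually independent, with marginals I will call $\mu_A^t,\mu_B^t,\mu_C^t$. Then I would read off reviewer $A$'s expected payment criterion-by-criterion from \eqref{rew}. For the expert term I would use $\mathds{E}_{x\sim p}[\log q(x)] = -H(p)-D_{\mathrm{KL}}(p\Vert q)$: because $\hat X_C^t$ is independent of $\hat X_B^t$, revealing $C$'s score conveys nothing about $B$'s report, so $A$'s payoff-maximizing first and updated predictions both equal $\mu_B^t$, giving expected expert payment $-2\alpha_A^t H(\mu_B^t)\le 0$. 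For the target term I would use that $C$ best-responds by reporting its true conditional beliefs about $A$'s report; independence again forces $\hat{\bf p}_{A^t\leftarrow C}=\hat{\bf p}^+_{A^t\leftarrow C}=\mu_A^t$, so the target term vanishes identically. Summing over criteria, every reviewer's expected payment in any uninformative equilibrium is $-2\sum_t\alpha_A^t H(\mu_B^t)\le 0$, and with zero effort cost their expected utility is $\le 0$.

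To complete part (a) I would exhibit the maximizer: the profile in which every reviewer deterministically reports a fixed score on each criterion and predicts these constants. Then each $\mu^t$ is a point mass, so $H(\mu^t)=0$ and every log score on the equilibrium path equals $\log 1 = 0$; a short best-response check (predicting the known constant maximizes the expert term at $0$; any deviating review-score report leaves the target term at $0$ because $C$'s two predictions of it are equal, since the revealed score is uninformative) confirms this is a Bayesian Nash equilibrium. Hence the highest-paying uninformative equilibrium gives every reviewer expected payment exactly $0$, and aggregate reviewer welfare---the sum---is $0$. The one delicate point, and the main obstacle, is the familiar clash between the log rule and degenerate distributions: point-mass predictions score as $\log 1$ on their support but are formally $-\infty$ off it, so I would either adopt the standard full-support-prediction convention (predictions arbitrarily close to the constants, with supremal expected score $0$) or work in the closure of admissible strategies. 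This bookkeeping aside, everything reduces to the independence argument above together with standard properties of the log scoring rule.

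\emph{Part (b).} Assume the hyperparameters $\{\boldsymbol{\alpha}_A^{[T]},\boldsymbol{\beta}_A^{[T]}\}$ are calibrated as in Theorem \ref{optparam}, so that the fully-informative truthful profile gives every reviewer strictly positive expected utility. Theorem \ref{truththm} gives truthfulness on shared complete criteria and Theorem \ref{optparam} makes completing all $T$ criteria a best response, so this profile is a Bayesian Nash equilibrium; since its payoff is strictly positive it beats the outside option of $0$ (the value of not reviewing in \eqref{eq:exputil2}), so participating, completing all criteria, and reporting truthfully is individually rational when peers do the same. Finally, part (a) shows the uninformed equilibrium pays $0$ to each reviewer and $0$ in aggregate, whereas the fully-informative equilibrium pays each reviewer a strictly positive amount and hence a strictly positive aggregate, so it strictly dominates the uninformed equilibrium both individually and in aggregate, which is the remaining claim.
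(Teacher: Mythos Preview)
Your proposal is correct and follows essentially the same route as the paper: both argue that in any zero-effort equilibrium the reported scores are independent, so the target payment (a conditional mutual information) vanishes and the expert payment is a non-positive negative-entropy term, maximized at zero by deterministic reports; part (b) then follows immediately from the hyperparameter assumption and part (a). Your version is somewhat more careful than the paper's---you explicitly compute the expert term as $-2\alpha_A^t H(\mu_B^t)$, verify that the constant-report profile is actually a BNE, and flag the log-rule/point-mass technicality that the paper glosses over---but the underlying argument is the same.
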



\section{Related Work}

There is an immense amount of scholarship on improving the peer review process. Here, we discuss three bodies of related work: 1) proposals to improve scientific peer review from computer science, 2) general proposals to reform or replace scientific peer review, and 3) the information elicitation without verification literature \citep{faltings2017game, waggoner2013information}. We refer readers to \citet{shah2022challenges} for a comprehensive overview of the challenges facing peer review today. Readers may also be interested in a broader historical overview of the development of peer review (focused primarily on the social sciences) presented by \citet{merriman2020peer}.

\subsection{Proposals from Computer Science}

A number of proposals to handle the rapid rise in submissions and reviewing load have been put forward by the machine learning community. One idea is to allow some editorial screening, where Area Chairs can reject submissions without full reviews; this would eliminate the need for $\sim 4\%$ of reviews but some papers may still be screened out unfairly. Another proposal is to introduce submission caps, which limit the number of submissions by a single author, but a cap of 10 submissions would again only eliminate $\sim 4\%$ of the reviewing load. An analysis of NeurIPS 2019 data \citep{neural} concluded that it was unclear how best to rapidly filter papers prior to full review; our proposed submission mechanism uses a VCG auction to determine which papers are accepted for review, where authors signal how ready their work is and the mechanism accepts as many papers for review as it can find quality reviewers for. 

Criticizing the slow pace of peer review, \citet{lecun} and \citet{Zhao2012TwoPF} propose an `open-market', where all papers are uploaded to an online repository as soon as they are ready for viewing.  `Reviewing entities' may then identify papers they wish to review, although authors may request reviews as well. However, data from past NeurIPS conferences \citep{shah2018design, neural} suggest that under the current implementation, reviewers do not make many eager bids to review papers, and eager bids are not predictive of eventual acceptance. Thus, mechanisms that rely on reviewers to identify and review papers may need to provide explicit incentives. Identifying the tension between rapid dissemination of ideas and the need to carefully screen results, \citet{bengio_2020} proposes a hybrid arXiv-journal model, where arXiv addresses the former need and journals take on the role of slow, careful screening of results to promote higher-quality publications. In these aforementioned approaches, reviewing incentive issues would still persist (as noted in fields with journal publication models); indeed, the reviewing burden may move from conferences to the open-market system or journals, as authors from less well-known backgrounds seeking community-issued certificates of the quality of their work for reputation-building purposes  begin to substitute conference submissions for journal submissions. Our submission mechanism disincentivizes weak submissions that add to the reviewing load while being unlikely to be accepted, and our reviewing mechanism provides direct incentives for honest and effortful reviews. 

Another particular area of focus has been on improving the reviewer-paper matching process \citep{stelmakh2023gold}. Several recent works in this space include a market-inspired approach assigning budgets and prices to incentivize reviewers to bid for their favourite papers \citep{meir2021market}, optimizing reviewer selection or assignment to obtain less noisy reviews \citep{shah2019principled} or better matching when the assignment is over multiple rounds \citep{jecmen2021near}, mitigating conflicts of interest \citep{kurokawa2015impartial, xu2018strategyproof,  aziz2019strategyproof, jecmen2020mitigating}, and determining the optimal way to display papers to reviewers to elicit meaningful bids \citep{fiez2020super}.  Other proposals involve improvements to various procedural aspects of peer review, such as a recruitment and mentorship pipeline for junior academics to alleviate the scarcity of reviewers \citep{stelmakh2020novice} and better aggregation of subjective opinions on criteria scores to accept/reject recommendations \citep{noothigattu2021loss}.  Such ideas are complementary to our reviewing mechanism; they focus on improving paper-reviewer matching or the acceptance decision process, while we focus on incentivizing honest and effortful reviews after reviewers are assigned and before scores are aggregated into accept/reject decisions.
    
Lastly, we mention relevant empirical evidence from the NeurIPS 2021 review process \citep{rastogi2022authors} which finds that authors systematically overestimate their paper's probability of acceptance, and suggests genuine uncalibration, non-response bias, and a potential desire to signal confidence to reviewers as possible explanations. Verifying if these results hold even when authors have explicit incentives to provide accurate forecasts would inform the viability of our proposed submission stage.

\subsection{General Proposals to Reform or Replace Peer Review}

Most relevant to this paper is the empirical work by \citet{chetty2014policies} studying interventions to promote pro-social behavior in peer review setting. They found that shorter deadlines, cash incentives, and social incentives all effectively reduced review times, and notably, cash incentives did not undermine intrinsic motivation to review. These interventions incurred minimal costs and did not impact review acceptance rates, report quality, or review times at other journals.

Another strain of research proposes modest changes to the system, such as allowing reviewers to re-purpose reviews from past submissions \citep{alberts2008reviewing} to reduce the demand for reviews, or improving the publication culture so authors submit more complete and relevant work to the appropriate venues instead of simply targeting the most prestigious venues.  Others \citep{arns2014open} propose reducing the reviewing burden by allowing some papers (e.g. null results) to be accepted without peer review. However, these proposals do not systematically tackle the previously discussed core incentive issues.

\citet{fox2010pubcreds} propose an approach similar to ours, where authors must pay in an artificial currency to have their paper reviewed, and reviewers are paid for their completed reviews. However, reviewers receive flat payment for their participation, and so may not provide honest or effortful reviews. \citet{sculley2018avoiding} propose pecuniary and non-pecuniary compensation for reviewers based on review quality (as judged by a rubric), but rating reviews would increase the workload on area chairs.

 In a different approach, \citet{de2007selecting, de2007citation} propose entirely \emph{replacing the existing peer review system} with market-based citation auctions, i.e., where earned citations are treated as currency, every submission comes with a ``citation bid", and the papers with the highest bids are accepted for publication. The citations that the paper goes on to earn are used to recoup the bet, with the idea being that the bid will reveal how impactful (as measured by number of citations) a researcher expects their work to be and the mechanism will publish papers expected to be highly cited. 
 However, even as this fixes the shortage of reviewers by getting rid of the need for them in the first place, predicting citations is likely extremely high variance, can lead to distorted citation practices, and poses equity issues if highly-cited authors can publish any paper they want by simply outbidding everyone else regardless of the quality of their work.

 \citet{prufer2010auction} also propose a citation-based paper submission system for journals, although their proposal does not abolish academic peer review. Instead, they propose a new currency of `academic dollars' used by \emph{editors} of various journals to bid for submitted papers in an auction; if accepted for review, the dollar value is paid to the authors, referees, and editors of \emph{cited} works. This is essentially a citation betting market for journal editors; a paper's bid will be commensurate with its future expected citation count. Referees will exert effort to provide better feedback to authors to improve the quality of their work in order to earn dollars from future citations, and authors have a marginally higher incentive to write better to earn citations. While an intriguing proposal, we suspect the task of predicting citations is challenging and the singular focus on optimizing for citations may distort citation practices.

 \citet{frijters2019improving} also propose a modified peer review incorporating market-based ideas, with a focus on mitigating `gate-keeping' by insiders and improving transparency. Their proposal includes allowing authors to bid for reviews of varying quality, and opening up the reviews themselves to be rated thereby allowing for `professionalization' of the process. However, the system is somewhat complicated and the market for rating reviews may not be very liquid, especially in technical fields. \citet{xiao2018incentive} propose a peer review mechanism to elicit effortful reviews by assigning papers to reviewers based on the quality of the authors' past reviews, but it computes this review quality by asking authors to rate the reviews they received and assuming these ratings to be honest. This is likely too strong an assumption in our setting where authors may simply report how favourable a review was; this would incentivize flattering and likely dishonest reviews, undermining the mechanism.

\subsection{Information Elicitation without Verification}

The IEWV literature is focused on the problem of eliciting truthful private `signals' when the mechanism designer is unable to verify the responses; this may be because the ground truth is difficult or impossible for the mechanism to access or there is simply no ground truth since the responses are subjective. The key application domains of interest include tasks like crowdsourcing data, product ratings, community sensing, and peer grading \citep{cai2015optimum, radanovic2016incentives, dasgupta2013crowdsourced, kong2018eliciting}. More recent work is also interested in ensuring that private signals are obtained with effort.  The typical setting in this literature consists of risk-neutral, utility-maximizing agents who maintain some prior distribution over `signals' they can receive from a mechanism. When the agents participate in the mechanism, the world reveals to them their private information or `signal', and they subsequently form posterior beliefs over the distribution of signals observed by the other agents. In our setting, this would be analogous to reviewers having some prior belief over the quality of papers they may see, and when presented with a specific paper, the reviewer forms their opinion of that particular paper and develops a posterior belief over other reviewers' assessments of the paper. The task of the mechanism designer is to elicit agents' private signals when verification is not possible by providing incentives for truthful reports.

 Two of the foundational approaches to this problem are the peer prediction framework \citep{miller2005eliciting} and the Bayesian Truth Serum (BTS) \citep{prelec2004bayesian}. The fundamental idea is to design payments by reasoning about agents' reports and posteriors over other agents' reports in a way that makes truth-telling a Bayesian Nash equilibrium. 
 However, these mechanisms make strong assumptions that are unlikely to hold in practice; subsequent works  have relaxed these assumptions  \citep{witkowski2012peer, witkowski2013learning, witkowski2012robust, radanovic2013robust, radanovic2014incentives2}. There are also \emph{output agreement} mechanisms \citep{waggoner2014output} with weaker assumptions, but these eschew truthfulness and seek to elicit common knowledge. 
 A directly relevant proposal to our work is by \citet{carvalho2013inducing}, which proposes a peer prediction-like mechanism for the same task of academic peer review. They assume that reviewers' signals on a given paper are drawn from a multinomial distribution, and model reviewers' prior over the multinomials with the conjugate (Dirichlet) prior so inference is tractable, but this a somewhat restrictive assumption. 
 They further test their mechanism on an experiment with Amazon Mechanical Turk workers, and find that their mechanism does incentivize more accurate reviews and such reviews also produce higher payments for the participating agents.  The truthful mechanism applicable in the most general setting is the recently proposed prediction market-style Differential Peer Prediction (DPP) mechanism \citep{schoenebeck2020two}. The truth-telling equilibrium maximizes agent welfare, and the mechanism is applicable to a small group of heterogeneous agents. Our proposed reviewing mechanism is a direct extension of this work.

 These aforementioned works do not incorporate any model of agent effort, and instead assume that nature or some black box simply reveals the private signal to the agents. The early literature incorporating effort models \citep{dasgupta2013crowdsourced,  witkowski2014dwelling, radanovic2016incentives, liu2016learning} is limited to the case of binary effort: either agents exert low/no effort, or they exert high/full effort.  
A binary effort model may be appropriate for quick tasks like those on crowdsourcing platforms, but reviewing technical papers is a time-intensive process, so a more sophisticated effort model is needed. \citet{cai2015optimum} propose a continuous effort model for data elicitation, where exerting effort gives agents a lower variance signal. However, in our setting, lower effort reviews may not only be noisier, they may also be systematically biased \citep{kong2018eliciting}.  
\citet{gao2016incentivizing} caution that in practice (and as highly relevant in our case), agents have a variety of low-effort signals such as paper length, topic, etc., potentially leading to low-effort coordination into uninformative equilibria. Indeed, \citet{gao2014trick} present empirical information showing that agents can successfully coordinate into uninformative equilibria, especially if these equilibria pay more than truthful equilibria. 

With these considerations, \citet{kong2018eliciting} propose a Hierarchical Mutual Information Paradigm (HMIP) that constructs a hierarchy of criteria that capture a range of cheap to expensive signals, and pays agents the \emph{information gained} from more expensive signals. This work builds on the Mutual Information Paradigm 
\citep{kong2019information} which shows that mechanisms incentivize truth-telling when the payment is based on the mutual information between agents' responses, and this is what peer prediction and BTS are attempting to do. We integrate the very general differential peer prediction (DPP) mechanism into the hierarchical mutual information paradigm (HMIP) to develop a strictly truthful prediction market-style mechanism that incentivizes effortful reviews, with the desired equilibria paying more than uninformative equilibria.

\section{Discussion}\label{disc}

We tackle incentive issues in scientific peer review by proposing a two-stage mechanism that allocates review slots to papers via a VCG auction and uses the revenue raised to pay participating reviewers  in our novel H-DIPP mechanism.  The H-DIPP mechanism aims to incentivize reviewers to obtain and report a clearer signal of a paper's quality in order to reduce the noise in the accept/reject decisions. We showed that the H-DIPP mechanism is strictly truthful, and payoffs can be designed so that exerting enough effort to score the paper honestly on all criteria is a strict Bayesian Nash equilibrium. We also showed that such a fully informative equilibrium pays more in expectation than any uninformative equilibrium, and described how the mechanism hyperparameters can be further tweaked so participating in the H-DIPP reviewing mechanism is individually rational. 
However, there are many important considerations for a practical implementation of our mechanism, some of which we discuss below (see Appendix \ref{conc} for concrete suggestions for future work). 

\paragraph{Fairness and Access Considerations} Our mechanism requires that each submitted paper is accompanied by a bid from a single author. 
In practice, papers are largely multi-author efforts, so the question of who pays the bid naturally arises. Not every co-author is equally qualified to participate in the review process; research and review may require different skills. Furthermore, new or under-resourced authors may not be able to afford strong bids
\footnote{A paper should only have a single bid sponsor; allowing multiple sponsors to contribute to a bid could unfairly privilege papers with more authors or those from larger research groups. For the same reason, credits should not be freely giftable to others.} as they may not have been able to build up a store of credits from reviewing. Thus, we will need a few interventions to ensure the proposed mechanism does not become an undue barrier to publishing academic research. One suggestion is to set a norm that the bid sponsor typically be a senior author who has banked enough credits by reviewing. We may also wish to allow non-author researchers to sponsor the credits for a paper's accompanying bid and be acknowledged in their capacity as sponsor. 
Additionally, we suggest granting a lump-sum of credits to first-time or junior authors to give them the right to submit some number of papers in a given time period while they build the necessary expertise for review. 

\paragraph{Common Prior Assumption} We assumed that reviewers' review scores are drawn from a (potentially asymmetric) common prior that is common knowledge, a standard assumption in the IEWV literature. This assumption is important to obtain the equilibrium results, but the extent to which it holds in practice needs to be answered empirically. This assumption can perhaps be made more realistic by making examples or summary statistics of a reviewer's past reviews available for reference (so any reviewer can gauge how their opinion correlates with their peers’) .

\paragraph{Post-processing Review Scores} The H-DIPP mechanism elicits honest and effortful opinions from \emph{heterogeneous} reviewers, with the goal of reducing the noise in estimates of a paper's quality. However, this also means that reviewers' honest and effortful opinions may be miscalibrated with respect to the rating scale \citep{shah2018design}. Thus, some post-processing of these scores \citep{wang2018your, lawrence_2015} to correct reviewer biases may produce a better picture of a paper's quality.

\paragraph{Conflicts of Interest} We note that our results on incentivizing effortful and truthful reviews hold under the assumption that there are no other incentives in play. However, if the matched reviewers have a conflict of interest, this would weaken the incentive to report honest and effortful reviews. Thus, related work on optimizing paper-reviewer matches \citep{xu2018strategyproof, jecmen2020mitigating} to avoid conflicts of interest is complementary and important to the success of our mechanism.

\paragraph{ A Secondary Market} We have assumed that our mechanism operates with its own currency, but we could further allow the sale and purchase of credits on a secondary market. This has the potential to allow individuals to earn real-world currency as reviewers, and could even encourage specialization in the review process: researchers who excel at reviewing in H-DIPP mechanism could allocate more of their time providing high quality reviews and be appropriately compensated for their work.

\paragraph{Behavioural and Social Considerations} Our focus in this paper has been to develop a mechanism for paper submission and reviewing with desirable theoretical properties. However, our fundamental assumptions about human behaviour may or may not hold in practice. A primary concern is that introducing payments for reviewers could crowd-out pro-social motivations to review \citep{squazzoni2013does, gneezy2000pay};  \citet{chetty2014policies} empirically study precisely this question and find that ``cash payments do not crowd out intrinsic motivation" and have ``little or no effect on rates of agreement to review, quality of reports, or review times at other journals".  Further empirical evidence for when such peer prediction mechanisms in particular succeed \citep{debmalya2020effectiveness} and fail \citep{gao2014trick} can help refine our mechanism and similar ideas.

Such issues merit further in-depth discussion and analysis. Our proposed mechanism is intended to serve as a theoretically grounded approach to tackling incentive issues in academic peer review, and we expect that a practical implementation of our proposed mechanism will require ad-hoc changes. Ultimately, experimental validation will be the final arbiter of the plausibility of our approach.

\begin{ack}
We thank Sandesh Adhikary for assistance with figures in the paper.
\end{ack}

\newpage

\bibliographystyle{apalike}
\bibliography{biblio}

\newpage

\appendix

\section{Table of Notation}\label{notapp}

\begin{table}[h]
  \label{sample-table}
  \centering
  \begin{tabular}{cl}
    \toprule
    {\bf Symbol}     & {\bf Description}    \\
    \midrule
    $N$ & Number of papers submitted to the mechanism       \\
    $K$ & Number of agents participating in the mechanism       \\
    $T$ & Number of criteria reviewers must evaluate a paper on \\
    $i$            & Index for a paper \\
    $k$ & Index for a reviewer \\
    $t$ & Index for a criterion \\
    $\omega_i$            & The $i$-th paper submitted to the mechanism \\
    $s_k^{(\cdot)}$ & Agent $k$'s Stage I/II strategy \\ 
    $s_k$ & Agent $k$'s full Stage I and Stage II strategy \\
    ${\bf s}$ & Full strategy profile of all agents on Stage I and Stage II \\
    ${\bf s}_{-k}$ & Strategy profile of all agents except agent $k$\\
    $U_k^{(\cdot)}$ & Agent $k$'s Stage I/II utility \\
    $U_k$ & Agent $k$'s total utility \\
    \midrule 
    $b_i$  &  The bid accompanying paper $\omega_i$ for a review slot \\
    $c_i$  &  The price charged for accepting paper $\omega_i$ for review \\
    $b^*_p$  &  The $p$-th highest bid \\
    $\omega^*_p$  &  The paper with the $p$-th highest bid \\
    $\delta_i$  &  Boolean denoting whether paper $\omega_i$ is accepted \emph{for publication} after review\\
    $\eta_i$  & Probability that paper $\omega_i$ is accepted \emph{for publication} conditional on review\\
    $v_k$ & Agent $k$'s value for a paper accepted for publication \\
    $P$ &  The number of papers accepted for review \\
    \midrule 
    $A, B, C$ & The reviewer triplet assigned to a given a paper \\
    $\Sigma_t$ & Set of allowed review scores on criterion $t$ \\
    $e_k(t)$ & Reviewer $k$'s cost for completing up to $t$ criteria \\
    $\mathds{P}$ & A reviewer triplet's common prior over effort-informed scores\\
    $\mathds{Q}$ & A reviewer triplet's joint distribution over reported scores and effort-informed scores \\
    ${\bf X}_k^{[t]}$ & Reviewer $k$'s true review scores on criteria $1, \ldots, t$  \\
    $g_k^{t'|t}({\bf x}_k^{[t]})$ & Reviewer $k$'s `best guess' of their signal on criterion $t'$ given signals ${\bf x}_k^{[t]}$ (where $t' > t$) \\
    $\hat{\bf X}_k^{[T]}$ & Reviewer $k$'s (potentially strategically) reported review scores on criteria $1, \ldots, T$ \\
    $\theta_k$ & Transition matrix representing reviewer $k$'s signal reporting strategy \\
    $\hat{\bf p}_{k'^{[T]} \leftarrow k}$ & Probability distribution of reviewer $k$' first prediction of $k'$'s scores on criteria $1, \ldots, T$ \\
    $\hat{\bf p}^+_{k'^{[T]} \leftarrow k}$ & Probability distribution of reviewer $k$'s second prediction of $k'$'s scores on criteria $1, \ldots, T$ \\
    $r_k$ & Reviewer $k$'s payment for reviewing \\
    $R_k(t)$ & \multirow{2}{28em}{Reviewer $k$'s expected payment for completing $t$ criteria when peers complete $T$ criteria and report truthfully}  \\
    & \\
    $R_{k}^\text{role}(t)$ & \multirow{2}{28em}{Reviewer $k$'s expected expert/target payment for completing $t$ criteria when peers complete $T$ criteria and report truthfully}  \\
    & \\
    $R_{k^{t'}}^\text{role}(t)$ & \multirow{2}{28em}{Reviewer $k$'s expected expert/target payment \emph{on criterion $t'$} for completing $t$ criteria when peers complete $T$ criteria and report truthfully}  \\
    & \\
    $\Delta R_k(t)$ & \multirow{2}{28em}{Reviewer $k$'s marginal expected payment for completing criterion $t$ when peers complete $T$ criteria and report truthfully} \\
     & \\
    $\Delta R_k^\text{role}(t)$ & \multirow{2}{28em}{Reviewer $k$'s marginal expected expert/target payment for completing criterion $t$ when peers complete $T$ criteria and report truthfully} \\
    & \\
    $\Delta R_{k^{t'}}^\text{role}(t)$ & \multirow{2}{30em}{Reviewer $k$'s marginal expected expert/target payment \emph{on criterion $t'$} for completing criterion $t$ when peers complete $T$ criteria and report truthfully} \\
     & \\
    $S_\text{log}(\cdot, \cdot)$ & Log proper scoring rule \\
    $I(\cdot)$ & Mutual information \\
    \bottomrule
  \end{tabular}
\end{table}
\newpage

\section{The 2014 NeurIPS Experiment} \label{2014exp}

In 2014, program chairs of the NeurIPS conference sought to study the consistency of the peer review process \citep{lawrence_2015}, and set up an experiment where 10\% of submitted papers underwent review by two independent committees. We describe the major findings below:
\begin{itemize}
    \item {\bf Inconsistency of Reviews}: About 26\% of papers received different outcomes from the two committees (compared to 37.5\% expected from random acceptance).
    \item {\bf Accept Precision}: Of the accepted papers, only about 50\% of papers would have survived re-review by an independent committee (compared to  25\% expected from random acceptance). 
    \item {\bf Reject Precision}: Of the rejected papers, about 80\% would have also been rejected in a re-review by an independent committee (compared to 75\% expected from random rejection).
 \end{itemize}
 These findings showed that the \emph{accept/reject} decisions were moderately better than pure chance, although still a fair bit noisy. Some of this is unavoidable in any highly selective mechanism seeking to allocate a limited number of spots to a large number of candidates on the basis of some noisily observed measure of `quality' \citep{frank2016success}. When a mechanism seeks to accept some small percentage of top quality candidates, even a small amount of noise in observing the quality of a candidate can add a great deal of randomness to which candidates make the cut. Yet it will be the case that the bulk of the randomness will be faced by the candidates clustered around the cutoff (`messy middle'); candidates well above the cutoff will likely be consistently accepted, and candidates well below the cutoff will likely be consistently rejected. Thus, the noise in observations of paper quality is an important factor affecting the randomness faced by authors in a highly selective acceptance process. \citet{shah2018design} find that about $45 \%$ of NeurIPS 2015 papers and $30\%$ of NeurIPS 2016 papers are in `messy middle' where `inter-reviewer agreements are near random'. Back in the 2014 NeurIPS experiment, an analysis of the independent committees' ratings of paper quality, finds the following:
 
 \begin{itemize}
     \item {\bf Correlation of Quality Scores}: The correlation between calibrated `quality' scores assigned by the two independent committees (adjusted for a reviewers' systematic biases) was 0.546.
     \item {\bf Citation Impact}: For accepted papers, there was little correlation (0.051) between reviewers' assessments of paper quality and papers' log-citation count (likely due to the restricted range effect). For rejected papers, there was a positive correlation (0.22) between reviewers' assessments of paper quality and papers' log citation count. 
 \end{itemize}

 The finding that about half the variance in scores is intrinsic to the paper (`objective') and half is idiosyncratic to the reviewer (`subjective') \citep{lawrence_2015} provides more information on the noise in observations. This subjectivity can arise from more readily identifiable sources like differences in effort or expertise, or from truly subjective differences of perspective. Naturally, if reviewers exert different levels of effort or have differing levels of expertise, their reviews may be systematically different from each other and would weaken the `objective' proportion of review scores. The correlation between quality scores and impact also suggests that the objective proportion of review scores has some better-than-random ability to predict the long-term impact of papers as measured by citations (relevant for criticism C4). In sum, the above results on accept/reject decisions as well as reviewers' quality scores paint a picture of a peer review system that performs noticeably better than random, but with a lot of room for improvement.
 
These results provide a baseline against which to judge the performance of our proposed mechanism. The goal then is for our proposed mechanism to incentivize reviewers to coordinate into a high-effort equilibrium where the `subjectivity' of reviewers' scores is lower, the correlation between independent reviewers' quality scores is higher, and the messy middle is smaller. The hypothesis is that the provision of incentives for effort \emph{during the review process} may help reviewers better observe the quality of papers, and reduce the variance of decisions under the current \emph{selective acceptance process}. 
 
 Finally, we note that conference selectiveness serves the dual purposes of focusing the community's attention on major results (motivation M2), as well as building reputation for authors who do high-quality work (motivation M3). While \citet{lawrence_2015} argues against using conference publications as a proxy for research quality or career advancement, decision-makers looking to make hiring decisions or determine what research to focus on will implicitly or explicitly be seeking relevant signals; in the absence of an alternative signal, we argue that it is important to ensure a peer-reviewed publication gives an appropriate signal of quality.

\section{Key elements of the Mutual Information Paradigm  \citep{kong2019information}}\label{mipsec}

 A scoring rule $S: \Sigma \times \Delta_\Sigma \rightarrow \mathds{R}$ is a function that accepts some outcome $\sigma$ (from some signal space $\Sigma$) and a probability distribution $\delta$ (representing a forecast over that signal space $\Delta_\Sigma$) and scores the forecast with a real number. The scoring rule is \emph{proper} if whenever the outcome $\sigma$ is drawn from a distribution $\delta \in \Delta_{\Sigma}$, the expected score $\mathds{E}_\sigma[S(\sigma, \cdot)]$ is maximized when the forecast is $\delta$. The scoring rule is \emph{strictly proper} if this maximum is unique. Our mechanism will use the log scoring rule $S_{\log}(\sigma, \delta) = \ln \delta(\sigma)$, which is a strictly proper scoring rule. Next, recall that the Shannon Mutual Information of two random variables is defined as the KL divergence of the product of their marginal distributions from the joint distribution: $I(X; Y) = D_{\text{KL}}(P(X, Y) || P(X) \otimes P(Y) )$. The conditional Shannon mutual information is given as $I(X;Y|Z) = \sum_z P(z) D_{\text{KL}}(P(X, Y | z) || P(X | z) \otimes P(Y | z) ) = \sum_z P(z) I(X; Y |z )$. \citet{kong2019information} show an interesting connection between conditional Shannon mutual information and the log scoring rule: 
 
 \begin{equation} \label{logscore}
 I(X; Y |Z) = \mathds{E}_{X, Y, Z}[S_{\text{log}}(Y, P(Y | X, Z)) - S_{\text{log}}(Y, P(Y | Z))]
 \end{equation}
 Finally, we present the important data processing inequality, typically useful in proving that deviating from truth-telling decreases payment. The inequality states that whenever a signal $Y$ is manipulated in a way independent of another signal $X$, the manipulated signal can only have lower mutual information with $X$.

\begin{lemma}[Strict Data Processing Inequality \citep{schoenebeck2020two}] \label{dataproc}

If $(X, Y)$ on a finite space $\mathcal{X} \times \mathcal{Y}$ is stochastic relevant (i.e., for any distinct $x, x' \in \mathcal{X}$, $P(Y|x) \neq P(Y|x')$) and has full support, then for any random function $\theta: \mathcal{Y} \rightarrow \mathcal{Y}$ where the randomness is independent of (X, Y), we have that $I(X; \theta(Y)) \leq I(X; Y)$. We have equality if and only if $\theta$ is an injective map.
\end{lemma}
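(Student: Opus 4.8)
The plan is to reduce the statement to the classical data-processing inequality for a Markov chain and then extract the equality condition from the hypotheses. First I would absorb the independent randomness of $\theta$ into a fixed channel: define the transition kernel $\kappa(z\mid y)=\Pr[\theta(y)=z]$, which is well defined precisely because the randomness is independent of $(X,Y)$, and set $Z=\theta(Y)$. This independence is exactly what yields the conditional independence $X \CI Z \mid Y$, i.e.\ $X\to Y\to Z$ is a Markov chain. Expanding $I(X;(Y,Z))$ by the chain rule of mutual information in the two natural orders gives
\begin{equation}
I(X;Z)+I(X;Y\mid Z)=I(X;Y)+I(X;Z\mid Y).
\end{equation}
Since $X\CI Z\mid Y$ we have $I(X;Z\mid Y)=0$, and $I(X;Y\mid Z)\ge 0$ always, so $I(X;\theta(Y))=I(X;Z)\le I(X;Y)$. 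This half needs neither stochastic relevance nor full support.

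Next I would characterize when the bound is tight. From the display, equality holds iff $I(X;Y\mid Z)=0$, i.e.\ iff $X\CI Y\mid Z$. I would unfold this pointwise: for every $z$ with $P(z)>0$ and every pair $y,y'$ in the fiber $S_z=\{y:\kappa(z\mid y)>0\}$, conditional independence forces $P(X\mid Y=y,Z=z)=P(X\mid Y=y',Z=z)$; using $X\CI Z\mid Y$ to drop the conditioning on $z$, this says the posteriors agree, $P(X\mid Y=y)=P(X\mid Y=y')$ (each such $y$ has $P(y)>0$ by full support, so these conditionals are defined). The easy (``if'') direction is then immediate: if $\theta$ is injective --- equivalently the fibers $S_z$ are pairwise disjoint, so each is a singleton and $Y=g(Z)$ is recoverable from $Z$ --- the posterior-agreement condition is vacuous and equality holds; concretely, mutual information is invariant under the injective relabeling $Y\mapsto\theta(Y)$.

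The main work, and the step I expect to be the real obstacle, is the converse: equality $\Rightarrow$ $\theta$ injective, which I would argue by contraposition. If $\theta$ is not injective, some fiber $S_z$ contains distinct outcomes $y\neq y'$; the pointwise analysis above then requires $P(X\mid Y=y)=P(X\mid Y=y')$, i.e.\ two distinct realizations of the (manipulated) signal induce identical beliefs about $X$. This is exactly what the stochastic-relevance hypothesis, together with full support, is there to forbid: distinct signal values must carry distinct conditional information about the partner signal, so collapsing or randomly merging any two of them strictly destroys information, contradicting equality. Hence every fiber is a singleton and $\theta$ is injective. The delicate point to get right is precisely this last translation --- showing that stochastic relevance and full support guarantee that merging two distinct outcomes strictly lowers $I(X;\cdot)$ (equivalently, that singleton fibers are both necessary and sufficient for $X\CI Y\mid Z$); this is the crux on which \emph{strictness}, as opposed to the weak inequality, rests.
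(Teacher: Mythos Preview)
The paper does not actually prove this lemma; it is quoted (with citation) as background in Appendix~\ref{mipsec}, so there is no in-paper argument to compare your attempt against. Your derivation of the inequality via the Markov chain $X\to Y\to Z$ and the chain rule is the standard one and is correct, as is the ``if'' direction of the equality characterization and your reduction of the ``only if'' direction to the statement that any two outcomes $y\neq y'$ sharing a fiber must satisfy $P(X\mid Y{=}y)=P(X\mid Y{=}y')$.

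The genuine gap is precisely at the step you yourself flag as delicate, and it is not merely delicate --- under the hypothesis as written it actually fails. You need that distinct $y,y'$ induce distinct posteriors $P(X\mid y)\neq P(X\mid y')$. But the stochastic-relevance hypothesis in the statement runs the \emph{other} way: it asserts $P(Y\mid x)\neq P(Y\mid x')$ for distinct $x,x'$, i.e.\ that values of $X$ are distinguishable through $Y$, not that values of $Y$ are distinguishable through $X$. These two directions are not equivalent. Concretely, take $X$ uniform on $\{0,1\}$ and $Y\in\{a,b,c\}$ with $P(Y\mid X{=}0)=(0.6,\,0.2,\,0.2)$ and $P(Y\mid X{=}1)=(0.4,\,0.3,\,0.3)$: this has full support and satisfies the stated relevance condition, yet $P(X\mid Y{=}b)=P(X\mid Y{=}c)=(0.4,\,0.6)$, so the deterministic non-injective map $a\mapsto a$, $b\mapsto b$, $c\mapsto b$ achieves $I(X;\theta(Y))=I(X;Y)$. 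Hence the ``only if'' half is false under the hypothesis as transcribed here. Your argument would go through verbatim if the relevance assumption were instead that $P(X\mid y)\neq P(X\mid y')$ for distinct $y,y'$ --- almost certainly what the cited source intends --- but you should not claim that the hypothesis as stated delivers it.
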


\begin{lemma}[Strict Conditional Data Processing Inequality \citep{schoenebeck2020two}]\label{dataproc2} If $(W, X, Y)$ on a finite space $\mathcal{W} \times \mathcal{X} \times \mathcal{Y}$ is second-order stochastic relevant (i.e., for any distinct $x, x' \in \mathcal{X}$, there is some $w \in \mathcal{W}$ such that $P(Y|w, x) \neq P(Y|w, x')$) and has full support, then for any random function $\theta: \mathcal{Y} \rightarrow \mathcal{Y}$ where the randomness is independent of (W, X, Y), we have that $I(X; \theta(Y) | W) \leq I(X; Y | W)$. We have equality if and only if $\theta$ is an injective map.

\end{lemma}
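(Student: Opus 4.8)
The plan is to reduce this conditional inequality to the \emph{unconditional} Strict Data Processing Inequality (Lemma~\ref{dataproc}) by folding the conditioning variable $W$ into the signal. Define the new pair $\tilde X := X$ on $\mathcal X$ and $\tilde Y := (W,Y)$ on $\mathcal W \times \mathcal Y$, together with the new random map $\tilde\theta : \mathcal W \times \mathcal Y \to \mathcal W \times \mathcal Y$ given by $\tilde\theta(w,y) := (w,\theta(y))$, which reuses exactly the randomness of $\theta$. By the chain rule for mutual information, $I(\tilde X; \tilde Y) = I(X;W) + I(X;Y\mid W)$ and $I(\tilde X; \tilde\theta(\tilde Y)) = I(X; (W,\theta(Y))) = I(X;W) + I(X;\theta(Y)\mid W)$. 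The $I(X;W)$ terms cancel, so $I(X;\theta(Y)\mid W)\le I(X;Y\mid W)$ is \emph{equivalent} to $I(\tilde X;\tilde\theta(\tilde Y))\le I(\tilde X;\tilde Y)$, and the two equality cases correspond. Hence it suffices to check that $(\tilde X,\tilde Y)$ and $\tilde\theta$ satisfy the hypotheses of Lemma~\ref{dataproc}.

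Two of those hypotheses are immediate: $(\tilde X,\tilde Y)$ has full support on $\mathcal X \times (\mathcal W \times \mathcal Y)$ because that is precisely the full-support assumption on $(W,X,Y)$; and the randomness of $\tilde\theta$ is independent of $(\tilde X,\tilde Y)$ because it is the randomness of $\theta$, assumed independent of $(W,X,Y)$. The one substantive step is to verify that $(\tilde X,\tilde Y)$ is \emph{stochastically relevant}, i.e.\ $P(W,Y\mid X{=}x)\ne P(W,Y\mid X{=}x')$ for all distinct $x,x'$. I would argue by contraposition: if these two joint conditionals were equal, marginalizing out $Y$ would give $P(W\mid x)=P(W\mid x')$, and dividing (valid by full support, since every $P(W{=}w)$ is positive) would force $P(Y\mid W{=}w,x)=P(Y\mid W{=}w,x')$ for \emph{every} $w$ --- contradicting second-order stochastic relevance of $(W,X,Y)$, which furnishes at least one $w$ separating $x$ from $x'$ through $Y$.

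Granting the hypotheses, Lemma~\ref{dataproc} gives $I(\tilde X;\tilde\theta(\tilde Y))\le I(\tilde X;\tilde Y)$ with equality iff $\tilde\theta$ is injective; translating back through the chain-rule identity yields the claimed inequality. To finish the equality characterization I would note that $\tilde\theta$ is injective iff $\theta$ is: since $\tilde\theta$ fixes the first coordinate, a collision $(w,\theta(y_1))=(w,\theta(y_2))$ is equivalent to $\theta(y_1)=\theta(y_2)$, and this equivalence holds realization-by-realization of the shared randomness, hence in the almost-sure sense.

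The main obstacle is the stochastic-relevance step of the reduction: one must recognize that the ``$\exists w$'' quantifier in second-order stochastic relevance is exactly what makes the \emph{lifted} pair $(X,(W,Y))$ ordinarily stochastically relevant --- a per-slice condition (stochastic relevance of $(X,Y\mid W{=}w)$ for each $w$) is neither implied by the assumption nor required. Everything else is bookkeeping; one should also confirm that Lemma~\ref{dataproc} is applied legitimately with its signal space instantiated as $\mathcal W \times \mathcal Y$ and $\tilde\theta$ a map from that space to itself, which it permits since it only asks for a finite signal space.
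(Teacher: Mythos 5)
The paper does not actually prove this statement: Lemma~\ref{dataproc2} is imported verbatim from \citet{schoenebeck2020two} alongside its unconditional counterpart Lemma~\ref{dataproc}, and Appendix~\ref{proofs} contains no argument for either, so there is no in-paper proof to compare against. Judged on its own terms, your reduction is correct. The chain-rule bookkeeping $I(X;(W,Y)) = I(X;W) + I(X;Y\mid W)$ and $I(X;(W,\theta(Y))) = I(X;W) + I(X;\theta(Y)\mid W)$ does make the conditional inequality (and its equality case) exactly equivalent to the unconditional one for the lifted pair, and you correctly identify the one non-trivial hypothesis to check: that $(X,(W,Y))$ is stochastically relevant in the ordinary sense. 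Your contrapositive argument there is right --- equality of the joint conditionals $P(W,Y\mid x)$ and $P(W,Y\mid x')$ would force $P(Y\mid w,x)=P(Y\mid w,x')$ for \emph{every} $w$ (division by $P(w\mid x)>0$ is licensed by full support), which is precisely the negation of the ``$\exists w$'' in second-order stochastic relevance; and you are right that a per-slice relevance condition is neither needed nor available. The transfer of the equality characterization via ``$\tilde\theta$ injective iff $\theta$ injective, realization by realization'' is also sound. The only caveat worth flagging is one you inherit from Lemma~\ref{dataproc} rather than introduce: ``injective map'' there must be read as a deterministic (almost-surely fixed) injection, since a mixture of distinct injections can strictly destroy information; your reduction preserves whichever reading the cited lemma intends, so this does not affect the validity of the argument.
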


These lemmas are the cornerstone of the mutual information paradigm, as they assert that whenever a mechanism pays an individual proportional to the mutual information between their signal and a peer's, the individual is incentivized to be truthful. Any strategy that involves lying, i.e., post-processing the signal, will only result in lower reward, and is hence disincentivized. In this work,  our result on truthfulness of the review mechanism under our payment scheme (Theorem \ref{truththm}) relies on a similar argument. We also note that a reviewer's review-based payment (i.e., their `target' payment) is a conditional mutual information \emph{in every equilibrium}.

\section{Relationship to HMIM and TDPP}\label{relsec}

Our two main considerations are incentivizing reviewers to exert effort, and to report their true reviews. We achieve both by drawing on ideas from the Target Differential Peer Prediction (TDPP) \citep{schoenebeck2020two} and Hierarchical Mutual Information Paradigm (HMIP)  \citep{kong2018eliciting}. The Differential Peer Prediction mechanism is a recently proposed prediction market-style mechanism applicable to our small-group, heterogeneous reviewer setting and determines payments for reviews of each paper independently. They present their mechanism for a group of 3 agents, ideal for our peer review setting. The mechanism works as follows: first, it assigns the 3 agents the roles of \emph{expert}, \emph{target}, and \emph{source}. Then, it asks the \emph{expert} for a probability distribution representing their prediction of the \emph{target's} signal. Finally, it reveals the \emph{source's} signal and asks the expert for an updated prediction of the target's signal. \citet{schoenebeck2020two} show that the total expected payment to agents is the mutual information of the source and target's signals conditioned on the expert's signals, and this yields a strong truthfulness guarantee \citep{kong2019information}. 

However, truthfulness is not enough; reporting low-effort signals can still be truthful. Additionally, since these truthful signals are paid their mutual information, when low-effort signals correlate more than high-effort signals, it becomes natural to coordinate into less informative equilibria. HMIP proposes a framework to get around this issue by constructing a hierarchy of criteria that range from requiring low-effort to high-effort signals, and paying agents the \emph{gain} in information on higher level criteria. Single-HMIM is an instantiation of the HMIP framework to incentivize effortful reviews on a single task, but uninformative equilibria can pay more than truthful equilibria and we seek to overcome this.

We combine the insights from TDPP and HMIP to propose a Hierarchical-Differential Peer Prediction (H-DIPP) mechanism.

\subsection{Relationship to HMIM}

Here we discuss how our mechanism differs from the single-HMIM proposal under the HMIP framework \citep{kong2018eliciting}.
\begin{itemize}
    \item {\bf Compulsory Reports}: Unlike single-HMIM, reviewers must provide scores for all criteria, not just the ones they have completed. This is currently the norm in peer review, and we would like to design the mechanism around this expectation. We model reviewers' alternative to effortful reports as `best guess' reports, as opposed to HMIM where a reviewer who does not complete a criterion does not report any signal and simply forgoes any payment on it. Realistically, a reviewer's signals on completed criteria may enable them to make reasonable guesses on higher level criteria, and we account for this in computing reviewers' marginal payments and setting hyperparameters.
    \item {\bf Partial Vs Total Order}: \citet{kong2018eliciting} only impose a partial order on their criteria hierarchy to keep it general; we impose a total order for ease of exposition.
    \item {\bf Different Payment Scheme}: While we adopt the hierarchical criteria structure from HMIP, the actual payments we provide are adopted from TDPP. However, at equilibrium, a reviewer's expected target payment is almost the mutual information payment scheme proposed in HMIP, under our conditional independence assumption (Assumption \ref{condindep}). The difference is that while HMIP (under the Conditional Independence assumption) would compute the mutual information of the target's and source's signals on criterion $t$ conditioned on the source's signals on lower level criteria, our mechanism also conditions this on the expert's private information.
    \item {\bf Strategy and Equilibria}: Our mechanism, like single-HMIM, is strictly truthful. The `idealized' HMIP has the desirable property that truthful reporting is a dominant strategy, but the practical single-HMIM algorithm does not have this property. Truthful reports are not a dominant strategy in our H-DIPP mechanism either, since target payments depend on the prediction strategy of experts. However, it is the case that a reviewer is paid greater target payments in truth-telling equilibria, and the use of proper scoring rules incentivizes experts to predict truthfully if targets deviate to truth-telling.
    
\end{itemize}

\subsection{Relationship to TDPP}

Here we discuss how our mechanism differs from the TDPP proposal \citep{schoenebeck2020two}.
\begin{itemize}

\item {\bf Playing All Roles}: TDPP assigns each agent a single role (expert, target, or source). In our mechanism, every agent plays all three roles. 

\item {\bf Effort Model}: TDPP assumes that signals are revealed to agents at no cost. We assume that the effort level determines the signals that are revealed to agents, so we adopt the hierarchical approach of HMIP to incentivize the effort needed to respond to higher level criteria.

\item {\bf Knowledge of Common Prior}: TDPP and other `detail-free' mechanisms assume no knowledge of the common prior. While we do not strictly need to know a reviewer triplet's common-prior to run the mechanism, we would like to know certain properties (the marginal payments from Lemma \ref{margrewlemma}) of the common prior (as in HMIP) so that we can set the mechanism hyperparameters to incentivize the desired fully informative equilibrium.

\item {\bf Multi-criteria Predictions and No Source Payment}: TDPP only elicits source and target responses on a single question, while H-DIPP elicits responses to a hierarchy of criteria. H-DIPP reveals the source's signal after each criterion (as opposed to revealing it all at once, or revealing target signals, etc.) as this brings the expected target payment close to the HMIP recommendation. However, this does mean that unlike TDPP, we do not provide any source payment for the following reason. The TDPP expected source payment was based on the expert $k$'s first prediction of target $k'$'s signals on a criterion $t$ (made without the source's information). This term did {not} change the source's incentives in the original TDPP (as their actions could not affect the expert's first prediction), but it was a convenient accounting trick to make the ex ante agent welfare equal to the conditional mutual information of the source and target's reports. However, we cannot incorporate such a source payment in H-DIPP: the expert makes predictions with the source's signals on lower level criteria, so \emph{the source can influence the expert's first prediction on every criterion $t>1$}. Thus, including such a source payment could lead the source to strategically misreport so that the expert's first predictions are poor. Naturally, leaving out source payments preserves source's incentives to report truthfully, but the ex ante reviewer welfare is no longer the conditional mutual information. This is why there are `extra' entropy terms in Lemma \ref{revwelf}.
    
    \item {\bf Strategy and Equilibria}: Like TDPP, our H-DIPP mechanism is strictly truthful. However, TDPP has the additional very desirable property of being \emph{strongly truthful}, i.e., the truth-telling equilibrium pays more than all other equilibria, so agents have no reason to coordinate into other equilibria. Owing to the multi-round structure of H-DIPP, we cannot quite make the same guarantee; it \emph{may} be possible to have equilibria where reviewers strategically misreport to increase a peer's target payment by more than the loss of their target payment thereby increasing aggregate payment, although there is no individual incentive to do so. It may be possible for all reviewers to do this so their individual and aggregate payments are higher than the fully informative equilibrium, although accepting the increased target payment bestowed by a source's strategic misreports and \emph{not reciprocating} by reporting truthfully would lead to a higher-paying equilibrium for the target. Investigating such equilibria is an important direction for future work. Nevertheless, we can make the weaker claim that the fully informative equilibrium pays more than an effortless uninformative equilibrium (Theorem \ref{ir}).
\end{itemize}

\section{Other Assumptions}\label{otherass}

We discuss some of the other assumptions underlying our H-DIPP review mechanism (Stage II) here.

\paragraph{Strategic and Uninformed Reports} Our mechanism will elicit reviewers' private review scores $X_k^t$ on all criteria, and we denote the (potentially strategic) \emph{reported review score} as $\hat{X}_k^t$. With $\theta_k$ as the transition matrix describing reviewer $k$'s reporting strategy, the joint distribution of true signals and reported signals under a signal strategy profile $(\theta_A, \theta_B ,\theta_C)$ is $\mathds{Q}(\hat{\bf X}_A^{[T]}, \hat{\bf X}_B^{[T]}, \hat{\bf X}_{C}^{[T]}, {\bf X}_A^{[T]}, {\bf X}_B^{[T]}, {\bf X}_{C}^{[T]}) = \theta_A(\hat{\bf X}_A^{[T]} | {\bf X}_A^{[T]}) \theta_B(\hat{\bf X}_B^{[T]} | {\bf X}_B^{[T]}) \theta_C(\hat{\bf X}_C^{[T]} | {\bf X}_C^{[T]}) \mathds{P}({\bf X}_A^{[T]}, {\bf X}_B^{[T]}, {\bf X}_{C}^{[T]})$.  When a reviewer plays a truthful reporting strategy (i.e., $\theta_k$ is identity), we denote it is as $\tau_k$.

Now, if a reviewer only completes up to $t$ criteria, they will not arrive at an informed private score on criteria at higher levels. However, their signals on lower level criteria do provide some information, allowing them to make reasonable guesses of the scores on higher level criteria\footnote{This is an important difference between our approach and the original HMIP, where reporting on incomplete criteria is optional. See Appendix \ref{relsec} for more details.}. Indeed, we assume that on incomplete higher level criteria, a reviewer's default private signal is their `best guess' of the signal that other agents would expect of them given the information they already have, and this may depend on how much effort the other agents exert. This is given by some function $g_{k}^{t'|t}: \Sigma_1 \times \ldots \times \Sigma_{t} \rightarrow \Sigma_{t'}$ where $x_k^{t'} = g_k^{t'|t}({\bf x}_k^{[t]})$ 
represents reviewer $k$'s best guess of their private assessment on criterion $t'$ when they have completed only criteria at level $t$ (where ${t'} > t)$. Note that $x_k^t$ thus either refers to an effort-informed signal drawn from the common prior or a best-guess, depending on the reviewer's effort level. We use the notation ${\bf x}_k^{[t_k]}$ if we wish to emphasize that we are only referring to the effort-informed signals on criteria $[t_k]$  (when $t_k < T$).

  \paragraph{Conditional Independence and Stochastic Relevance} First, we adopt a conditional independence assumption from \citet{kong2018eliciting}, which states that a reviewer $k'$'s private score on some criterion $t$ contains all the information needed to predict their peer $k$'s score on that criterion, given that peer's scores on lower level criteria. We write $I(\cdot)$ to represent the mutual information (see Appendix \ref{mipsec} for more details).

   \begin{assumption}[Conditional Independence]\label{condindep} 
  We make the following conditional independence assumption on the common prior:
  \begin{equation}
      I({\bf X}_{k'}^{[T]}, X_k^t | {\bf X}_k^{[t-1]}) = I(X_{k'}^{t}, X_k^t | {\bf X}_k^{[t-1]})
  \end{equation}
  or equivalently,
  \begin{equation}
      I( {\bf X}_{k'}^{[T]}\backslash X_{k'}^t, X_k^t | {\bf X}_k^{[t-1]}, X_{k'}^t) = 0
  \end{equation}
  \end{assumption}
  
 Finally, we make an assumption on the informativeness of the signals a reviewer might observe. When a reviewer $k$ arrives at review scores ${\bf x}_k^{[t]}$ on $[t]$ criteria, they also arrive at a posterior belief  $\mathds{P}({\bf  X}_{-k}^{[T]}|{\bf x}_k^{[t]})$ about the review scores of other reviewers. Informally, we assume there is some signal a reviewer could observe, which when paired with two different signals from a second reviewer, would induce two different beliefs about the signals of the third reviewer.

  \begin{assumption}[Second Order Stochastic Relevance] \label{stoch} 
   Suppose that for any distinct signals $x_{B}^t, \tilde{x}_{B}^t \in \Sigma_t$ on any criterion $t$, there is $x_{A}^t \in \Sigma_t$ such that: $$\mathds{P}(X_{C}^t | x_{A}^t, x_{B}^t) \neq \mathds{P}(X_{C}^t | x_{A}^t, \tilde{x}_{B}^t)$$ 
   We assume the above holds for any permutation of the reviewers $\{A, B, C\}$, so the common prior $\mathds{P}$ is second order stochastic relevant.
  \end{assumption}

\section{Additional Results}\label{app:add}
 
 Here, we give some additional results characterizing the marginal payments to reviewers, optimal setting of mechanism hyperparameters, and aggregate reviewer payments. We also give a sample walkthrough for our H-DIPP reviewing mechanism.
 
Going forward, we use $R$ to denote expected payment. A reviewer's \emph{marginal expected payment} is the additional payment they gain for privately exerting enough effort to complete an additional criterion, under a fixed strategy by the other agents. We give the following lemma on reviewers' marginal expected payments for completing an additional criterion and show that it is always non-negative (when peers have completed all criteria and are reporting truthfully). We break up the total marginal expected payment $\Delta R_k$ into the marginal expected expert and target payments $\Delta R_k^\text{expert}$ and $\Delta R_k^\text{target}$, which in turn are each broken up into marginal payments $\Delta R_{k^t}^\text{expert}$ and $\Delta R_{k^t}^\text{target}$ on each criterion.

\begin{lemma}[Non-negative Marginal Payments]\label{margrewlemma}
Suppose reviewers $B, C$ complete all criteria and truthfully report their private review scores and predictions. Then, reviewer $A$'s marginal expected payment for completing criterion $t_A$ and honestly reporting private review scores and predictions is $\Delta R_A({t_A}) := R_A(t_A) - R_A(t_A-1)   =  \Delta {R}_A^{\text{expert}}(t_A) + \Delta{R}_A^{\text{target}}(t_A)$, where:
\begin{enumerate}\small
\item $\Delta R_A^{\text{expert}}(t_A) := {R}_A^{\text{expert}}(t_A) - {R}_A^{\text{expert}}(t_A - 1) = \sum_{t=1}^T \alpha_A^t \Delta R_{A^t}^{{\text{ expert}}}(t_A)$
\item $\Delta R_A^{\text{target}}(t_A) := {R}_A^\text{target}(t_A) - {R}_A^{\text{target}}(t_A - 1)=\sum_{t=1}^T \beta_A^t \Delta R_{A^t}^\text{target}(t_A)$
    \item $\Delta R_{A^t}^{\text{expert}}(t_A):=  I \left(X_B^t; X_A^{t_A}|{\bf X}_A^{[t_A-1]}, {\bf X}_C^{[t]}\right)  + I\left(X_B^t; X_A^{t_A} | {\bf X}_A^{[t_A-1]}, {\bf X}_C^{[t-1]}\right)$
    \item $\Delta R_{A^t}^\text{target}(t_A) := \begin{cases}
    0, & t < t_A \\
     I\left({X}_B^{t_A}; {X}_A^{t_A} | {\bf X}_C^{[T]}, {\bf X}_B^{[t_A-1]}\right) - I\left({X}_B^{t_A}; g_A^{t_A|t_A-1}({\bf X}_A^{[t_A-1]}) | {\bf X}_C^{[T]}, {\bf X}_B^{[t_A-1]}\right), & t=t_A \\ 
     \mathds{E}\left[\text{KL}\left({\mathds{P}({X}_B^t| g_A^{t|t_A}({\bf X}_A^{[t_A]}), {\bf X}_C^{[T]}, {\bf X}_B^{[t-1]}) || \mathds{P}({X}_B^t| g_A^{t|t_A-1}({\bf X}_A^{[t_A-1]}), {\bf X}_C^{[T]}, {\bf X}_B^{[t-1]})}\right) \right], & t > t_A
    \end{cases}$
\end{enumerate}

Furthermore, the marginal expected expert and target payments on each criterion are always non-negative, i.e., $\Delta R_{A^t}^\text{expert}(t_A), \Delta R_{A^t}^\text{target}(t_A) \geq 0$ for all $t, t_A \in [T]$ and the inequality is strict when $t = t_A$. Similar results hold for reviewers $B$ and $C$ when peers complete all criteria and report truthfully.

\end{lemma}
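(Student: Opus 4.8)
The plan is to unpack each reward component of Mechanism~\ref{algo:duplicate2} into expectations, take differences in $t_A$, and identify the telescoping structure using Equation~\eqref{logscore}. Concretely, fix the honest strategies of $B$ and $C$ (they complete all criteria and report truthfully). Then every score that reviewer $A$ sees in the game — namely $\hat{x}_C^t = x_C^t$ and the realized target $\hat{x}_B^t = x_B^t$ — is the true signal, and $A$'s optimal (Bayesian) predictions, conditional on whatever information $A$ actually holds, are the true posteriors. So when $A$ has completed criteria $[t_A]$, its first prediction on criterion $t$ is $\mathds{P}(X_B^t \mid \mathbf{x}_A^{[t_A]}, \mathbf{x}_C^{[t-1]})$ and its second prediction is $\mathds{P}(X_B^t \mid \mathbf{x}_A^{[t_A]}, \mathbf{x}_C^{[t]})$, where for $t > t_A$ the ``signal'' $\mathbf{x}_A^{[t_A]}$ should be read as the effort-informed part together with the best-guess extension $g_A^{t|t_A}$ on the higher criteria. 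Substituting into the expert reward line of \eqref{rew} and taking expectations over the common prior, $R_{A^t}^{\text{expert}}(t_A) = \mathds{E}[S_{\log}(X_B^t, \mathds{P}(X_B^t\mid \cdot, \mathbf{X}_C^{[t-1]}))] + \mathds{E}[S_{\log}(X_B^t, \mathds{P}(X_B^t \mid \cdot, \mathbf{X}_C^{[t]}))]$ where the conditioning on $A$'s information changes from $\mathbf{X}_A^{[t_A-1]}$ to $\mathbf{X}_A^{[t_A]}$ as we increment $t_A$. Applying \eqref{logscore} with the roles ``$Y = X_B^t$, $X = X_A^{t_A}$, $Z = (\mathbf{X}_A^{[t_A-1]}, \mathbf{X}_C^{[\cdot]})$'' turns each such difference into exactly the conditional-mutual-information expression claimed in item~3; non-negativity of item~3 is then immediate since conditional mutual information is non-negative, and the strictness at $t = t_A$ follows from second-order stochastic relevance (Assumption~\ref{stoch}) via the strict conditional data processing inequality (Lemma~\ref{dataproc2}), which guarantees $X_A^{t_A}$ is genuinely informative about $X_B^{t_A}$ given the rest.

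For the target payment, the same substitution into the target line of \eqref{rew} gives $R_{A^t}^{\text{target}}(t_A) = \mathds{E}[S_{\log}(X_A^t, \mathds{P}(X_A^t \mid \mathbf{X}_C^{[T]}, \mathbf{X}_B^{[t]})) - S_{\log}(X_A^t, \mathds{P}(X_A^t \mid \mathbf{X}_C^{[T]}, \mathbf{X}_B^{[t-1]}))]$, because the expert in this role is reviewer $C$ predicting $A$'s score (wait — reading the mechanism, the target reward for $A$ uses $\hat{\mathbf p}_{A^t \leftarrow C}$, i.e. $C$ is the expert and $B$ is the revealed source, so the difference of second over first prediction is the information $B$'s criterion-$t$ signal adds to $C$'s prediction of $A$'s score). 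Here the subtlety is whose ``$X_A^t$'' appears: when $t < t_A$ it is effort-informed; when $t = t_A$ the baseline $(t_A - 1)$ version has $A$ reporting $g_A^{t_A \mid t_A - 1}(\mathbf{X}_A^{[t_A-1]})$ instead, which is why item~4's middle case is a \emph{difference} of two conditional mutual informations; and when $t > t_A$, both the $t_A$ and $t_A - 1$ versions have $A$ reporting a best guess ($g_A^{t \mid t_A}$ vs.\ $g_A^{t \mid t_A - 1}$), so the marginal is the expected KL between the induced posteriors, as in item~4's last case. I would handle these three cases separately, in each case writing the marginal $\Delta R_{A^t}^{\text{target}}(t_A)$ as $\mathds{E}[D_{\mathrm{KL}}(\mathds{P}(X_B^t \mid \text{richer}) \,\|\, \mathds{P}(X_B^t \mid \text{poorer}))]$ — this is the standard ``information gain = expected KL of updated over prior predictive'' identity for the log score — and then non-negativity is just non-negativity of KL. Strictness at $t = t_A$ again invokes Lemma~\ref{dataproc2} (second-order stochastic relevance ensures the richer conditioning strictly changes the predictive distribution of $X_B^{t_A}$ with positive probability); the Conditional Independence assumption (Assumption~\ref{condindep}) is what lets us drop $\mathbf{X}_B^{[T]} \setminus \mathbf{X}_B^{[\le t]}$ from the conditioning so the expressions match HMIP's form and collapse to the clean single-criterion mutual informations written in the statement.

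The decomposition claims in items~1 and~2 are then immediate from linearity: $r_A = \sum_t \alpha_A^t r_{A^t}^{\text{expert}} + \beta_A^t r_{A^t}^{\text{target}}$, so $\Delta R_A(t_A) = \sum_t \alpha_A^t \Delta R_{A^t}^{\text{expert}}(t_A) + \sum_t \beta_A^t \Delta R_{A^t}^{\text{target}}(t_A)$, and $\Delta R_A^{\text{expert}}$, $\Delta R_A^{\text{target}}$ are the two partial sums; the symmetric statement for $B$ and $C$ follows by relabeling the triple (the mechanism treats the three reviewers symmetrically in roles). The main obstacle I anticipate is the careful bookkeeping around the best-guess functions $g_A^{t'|t}$: the quantity ``$R_A(t_A)$'' is the expected payment when $A$ exerts effort level $t_A$ but is nonetheless \emph{required to report on all $T$ criteria}, so one must track, criterion by criterion, whether $A$'s report is effort-informed or a best guess, and this differs between the $t_A$ and $t_A-1$ worlds only on criterion $t_A$ itself (where the reported object switches from a guess to a true signal) while on criteria $t > t_A$ the guess merely gets refined — getting these three regimes ($t < t_A$, $t = t_A$, $t > t_A$) lined up with the three cases of item~4, and verifying that the ``extra'' conditioning on the expert's private information (the point flagged in Appendix~\ref{relsec}) does not break non-negativity, is where the real work lies. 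The mutual-information manipulations themselves are routine given \eqref{logscore} and Lemmas~\ref{dataproc}–\ref{dataproc2}.
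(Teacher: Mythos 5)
Your proposal follows the paper's proof essentially step for step: expand $R_A(t_A)$ under truthful peer play, convert log-score differences into conditional mutual informations via the identity in Equation \eqref{logscore}, split the marginal target payment into the three regimes $t<t_A$, $t=t_A$, $t>t_A$ exactly as in item 4 (unchanged, guess-to-signal switch, guess refinement yielding an expected KL), and obtain non-negativity from non-negativity of mutual information and KL divergence with strictness at $t=t_A$ from second-order stochastic relevance. The only place the paper does more visible work than your sketch is the chain establishing $I\left({X}_B^{t_A}; g_A^{t_A|t_A-1}({\bf X}_A^{[t_A-1]}) \,|\, {\bf X}_C^{[T]}, {\bf X}_B^{[t_A-1]}\right) < I\left({X}_B^{t_A}; {X}_A^{t_A} \,|\, {\bf X}_C^{[T]}, {\bf X}_B^{[t_A-1]}\right)$ --- information monotonicity, then the chain rule, then conditional independence, then stochastic relevance --- but you have named all of those ingredients, so this is the same argument.
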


With this, we can now specify how to set the mechanism hyperparameters to  incentivize a reviewer to complete all criteria, such that all reviewers completing all criteria and reporting truthfully is a strict BNE (equivalent to the \emph{potent} property of hyperparameters in \citet{kong2018eliciting}). 

\begin{theorem}[Optimal Mechanism Hyperparameters]\label{optparam}
Suppose mechanism hyperparameters $\{  \boldsymbol{\alpha}_A^{[T]}, \boldsymbol{\beta}_A^{[T]}, \boldsymbol{\alpha}_B^{[T]}, \boldsymbol{\beta}_B^{[T]}, \boldsymbol{\alpha}_C^{[T]}, \boldsymbol{\beta}_C^{[T]}\}$ are solutions to the following linear program:
\begin{equation}\small
\begin{split}
    &\min \sum_{k \in \{A, B, C \}} \sum_{t=1}^T \alpha_k^t R_{k^t}^\text{expert}(T) +  \beta_k^t R_{k^t}^\text{target}(T)\\
    &\text{s.t. } \sum_{t=1}^T \alpha_A^t \Delta R_{A^t}^\text{expert}(t_A) + \beta_A^t\Delta R_{A^t}^\text{target}(t_A) > \Delta e_A(t_A) \text{ for } 1 \leq t_A \leq T \\
    &\phantom{\text{s.t. }} \sum_{t=1}^T \alpha_B^t \Delta R_{B^t}^\text{expert}(t_B) + \beta_B^t\Delta R_{B^t}^\text{target}(t_B) > \Delta e_B(t_B) \text{ for } 1 \leq t_B \leq T \\
    &\phantom{\text{s.t. }} \sum_{t=1}^T \alpha_C^t \Delta R_{C^t}^\text{expert}(t_C) + \beta_C^t\Delta R_{C^t}^\text{target}(t_C) > \Delta e_C(t_C) \text{ for } 1 \leq t_C \leq T \\
    &\phantom{\text{s.t. }} \alpha_k^t, \beta_k^t \geq 0
\end{split}
\end{equation}

Then, for any reviewer $k$, if their peers complete all criteria and report predictions and review scores truthfully, reviewer $k$ strictly maximizes their expected utility by also completing all criteria and reporting predictions and review scores truthfully. In other words, completing all criteria and reporting truthfully is a strict Bayesian-Nash equilibrium. We refer to this equilibrium as the fully-informative equilibrium.
\end{theorem}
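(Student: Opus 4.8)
The plan is to fix a reviewer $k$ (say $k=A$), condition on peers $B,C$ completing all $T$ criteria and reporting truthfully, and show that among all of $A$'s joint choices of (effort level, reporting strategy) the pair (complete all $T$ criteria, report truthfully) is the \emph{unique} maximizer of expected utility. I would first separate the two choices. For a \emph{fixed} effort level $t_A$, I would argue $A$'s optimal reporting is: honest on the completed criteria $[t_A]$, and equal to the best guesses $g_A^{t'|t_A}(\cdot)$ on the remaining criteria $t' > t_A$. Honesty on $[t_A]$ is exactly the strict best response guaranteed by Theorem \ref{truththm} applied with shared complete criteria $[t'] = [t_A]$ (valid since the peers complete everything). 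On the higher criteria, truthful Bayes-optimal predictions are strictly optimal by strict properness of $S_{\log}$ for the expert payments, while reporting the best guess for one's own score is optimal for the target payments by the same mechanism used inside the proof of Theorem \ref{truththm} (the expert's second-minus-first log-score, in expectation over $A$'s residual uncertainty and the peers' truthful behaviour, is uniquely maximized at $A$'s best estimate of its own signal). Consequently $A$'s expected utility at effort $t_A$ equals $R_A(t_A) - e_A(t_A)$, with $R_A$ as defined in Lemma \ref{margrewlemma}.

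With that in hand, the two types of deviation from (full effort, truthful) are: (i) full effort but misreporting, which is strictly worse by Theorem \ref{truththm} with $[t']=[T]$; and (ii) exerting effort to some level $t_A < T$, which yields at most $R_A(t_A) - e_A(t_A)$ by the previous paragraph, so it suffices to show $R_A(T) - e_A(T) > R_A(t_A) - e_A(t_A)$. I would establish this by telescoping: $R_A(T) - R_A(t_A) = \sum_{s=t_A+1}^{T} \Delta R_A(s)$ and $e_A(T) - e_A(t_A) = \sum_{s=t_A+1}^{T} \Delta e_A(s)$, so it is enough to show $\Delta R_A(s) > \Delta e_A(s)$ for each $s$. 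By Lemma \ref{margrewlemma}, $\Delta R_A(s) = \sum_{t=1}^{T} \alpha_A^t \Delta R_{A^t}^{\text{expert}}(s) + \beta_A^t \Delta R_{A^t}^{\text{target}}(s)$, which is precisely the left-hand side of the $s$-th feasibility constraint of the linear program; since the chosen hyperparameters are feasible, this exceeds $\Delta e_A(s)$. Summing over the nonempty range $s = t_A+1,\dots,T$ gives a strict inequality. The same argument applies verbatim to $B$ and $C$, so completing all criteria and reporting truthfully is a strict Bayesian Nash equilibrium; I would also remark that any feasible point of the LP already delivers the conclusion, the objective merely picking the one of minimal expected mechanism expenditure.

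Before any of this I would check the LP is feasible (else the statement is vacuous): by Lemma \ref{margrewlemma} all marginals $\Delta R_{A^t}^{\text{expert}}(t_A), \Delta R_{A^t}^{\text{target}}(t_A)$ are non-negative and strictly positive when $t = t_A$, so taking each $\alpha_k^t$ large enough makes the left-hand side of every constraint as large as desired while never decreasing any other constraint's left-hand side; hence the (open) feasible region is nonempty. The step I expect to be the main obstacle is the first one: rigorously pinning down that, conditional on a given effort level $t_A$, honest-on-$[t_A]$ / best-guess-on-the-rest reporting is the \emph{unique} optimal reporting strategy, so that $R_A(t_A)$ genuinely is the best payoff achievable at effort $t_A$. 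This is delicate because $A$'s target payment depends on the peers' prediction strategies and on the modelling assumption governing best guesses, and doing it cleanly seems to require reusing the stochastic-relevance and (conditional) data-processing machinery (Lemmas \ref{dataproc}, \ref{dataproc2}) from the proof of Theorem \ref{truththm} rather than invoking properness alone.
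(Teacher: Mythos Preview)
Your proposal is correct and follows essentially the same approach as the paper's proof: invoke Theorem~\ref{truththm} for truthfulness at a given effort level, then telescope $R_A(T)-R_A(t_A)$ into marginal increments $\Delta R_A(s)$ and use the LP feasibility constraints (via Lemma~\ref{margrewlemma}) to show each strictly exceeds $\Delta e_A(s)$, with feasibility of the LP coming from the strict positivity of the diagonal terms $\Delta R_{A^{t_A}}^{\text{expert}}(t_A)$ and $\Delta R_{A^{t_A}}^{\text{target}}(t_A)$. You are in fact more careful than the paper on one point it largely glosses over---verifying that $R_A(t_A)$ is the \emph{maximum} achievable expected payment at effort level $t_A$ (not merely the payment from truthful/best-guess reporting)---and you correctly flag this as the step that needs the stochastic-relevance and data-processing machinery rather than properness alone.
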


Having previously shown that our desired strategy profile (all reviewers completing all criteria and reporting truthfully) is a strict BNE, we now discuss how this compares to other equilibria and strategy profiles. A common challenge is that truthful equilibria pay less than effortless uninformative equilibria (like in single-HMIM), so the truthful equilibria are not realized in practice. However, this is not the case with H-DIPP; as we stated in Theorem \ref{ir} an uninformative equilibrium pays nothing in expectation and the fully informative equilibrium pays strictly more. To show this, we need the following lemma on the total payment to reviewers in any equilibrium:
 
 \begin{lemma}[Aggregate Reviewer Payments]\label{revwelf}
 Let $s_k = \left(e_k(t_k), \hat{\bf x}_k^{[T]}, \hat{\bf p}^+_{k'^{[T]} \leftarrow k}, \hat{\bf p}_{k'^{[T]} \leftarrow k}\right)$ be reviewer $k$'s strategy, where $k'$ is reviewer $k$'s target. If ${\bf s} = (s_A, s_B, s_C)$ is an equilibrium strategy profile,
 reviewer $A$'s expected payment is:
 \begin{equation} \label{indivrew} \small
     \mathds{E}[r_A] = \sum_{t=1}^T \alpha_A^t\left( -\mathds{E}\left[H\left(\mathds{Q}\left({\hat{X}_B^t | {\bf x}_A^{[t_A]}, \hat{\bf x}_C^{[t]}}\right)\right)\right] -\mathds{E}\left[H\left(\mathds{Q}\left({\hat{X}_B^t | {\bf x}_A^{[t_A]}, \hat{\bf x}_C^{[t-1]}}\right)\right)\right]\right)   + \beta_A^t I\left(\hat{X}_B^t; \hat{ X}_A^t | {\bf X}_C^{[t_C]}, \hat{\bf X}_B^{[t-1]}\right)
 \end{equation}
 
(and expected payments for peers $B,C$ computed analogously). Reviewers' total expected payment is:
\begin{equation}\small
    \begin{split}
    \mathds{E}[r_A + r_B + r_C] &=  \sum_{t=1}^T \alpha_A^t\left( -\mathds{E}\left[H\left(\mathds{Q}\left({\hat{X}_B^t | {\bf x}_A^{[t_A]}, \hat{\bf x}_C^{[t]}}\right)\right)\right] -\mathds{E}\left[H\left(\mathds{Q}\left({\hat{X}_B^t | {\bf x}_A^{[t_A]}, \hat{\bf x}_C^{[t-1]}}\right)\right)\right]\right)   + \beta_A^t I\left(\hat{X}_B^t; \hat{ X}_A^t | {\bf X}_C^{[t_C]}, \hat{\bf X}_B^{[t-1]}\right) \\
        &\phantom{=} + \alpha_B^t\left( -\mathds{E}\left[H\left(\mathds{Q}\left({\hat{X}_C^t | {\bf x}_B^{[t_B]}, \hat{\bf x}_A^{[t]}}\right)\right)\right] -\mathds{E}\left[H\left(\mathds{Q}\left({\hat{X}_C^t | {\bf x}_B^{[t_B]}, \hat{\bf x}_A^{[t-1]}}\right)\right)\right]\right)   + \beta_B^t I\left(\hat{X}_C^t; \hat{ X}_B^t | {\bf X}_A^{[t_A]}, \hat{\bf X}_C^{[t-1]}\right) \\
        &\phantom{=} + \alpha_C^t\left( -\mathds{E}\left[H\left(\mathds{Q}\left({\hat{X}_A^t | {\bf x}_C^{[t_C]}, \hat{\bf x}_B^{[t]}}\right)\right)\right] -\mathds{E}\left[H\left(\mathds{Q}\left({\hat{X}_A^t | {\bf x}_C^{[t_A]}, \hat{\bf x}_B^{[t-1]}}\right)\right)\right]\right)   + \beta_C^t I\left(\hat{X}_A^t; \hat{ X}_C^t | {\bf X}_B^{[t_B]}, \hat{\bf X}_A^{[t-1]}\right)
    \end{split}
\end{equation}
 \end{lemma}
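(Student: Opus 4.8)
The plan is to evaluate $\mathds{E}[r_A]$ by substituting the equilibrium prediction reports into the per-criterion payment \eqref{rew} and reducing each piece to an entropy or a conditional mutual information. First I would use the hypothesis that ${\bf s}$ is an equilibrium: from reviewer $A$'s perspective their two predictions $\hat{\bf p}_{B^t\leftarrow A},\hat{\bf p}^+_{B^t\leftarrow A}$ appear \emph{only} in their own expert payment (coefficient $\alpha_A^t>0$), where they are scored by the strictly proper rule $S_\text{log}$ against the \emph{reported} signal $\hat x_B^t$; hence a best response forces $A$ to report, at each stage, the true posterior law of $\hat X_B^t$ under $\mathds{Q}$ given the information $A$ then holds. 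That information is $A$'s own effort-informed scores ${\bf x}_A^{[t_A]}$ (best guesses on incomplete criteria are deterministic functions of these, and $A$'s own report is an independent randomization from them, so neither adds anything) together with the source's already-revealed reports: $\hat{\bf x}_C^{[t-1]}$ before $\hat x_C^t$ is disclosed and $\hat{\bf x}_C^{[t]}$ after. Thus in equilibrium $\hat{\bf p}_{B^t\leftarrow A}=\mathds{Q}(\hat X_B^t\mid{\bf x}_A^{[t_A]},\hat{\bf x}_C^{[t-1]})$ and $\hat{\bf p}^+_{B^t\leftarrow A}=\mathds{Q}(\hat X_B^t\mid{\bf x}_A^{[t_A]},\hat{\bf x}_C^{[t]})$, with Assumption \ref{comprior} (full support) ensuring these are well defined and all logs finite.

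Second, I would plug these in. Each expert summand becomes an expected log-loss of the true law against itself, so $\mathds{E}_{\hat X_B^t}[\ln\mathds{Q}(\hat X_B^t\mid Z)]=-\mathds{E}[H(\mathds{Q}(\hat X_B^t\mid Z))]$ after conditioning on $Z$ and taking the outer expectation; with $Z=({\bf x}_A^{[t_A]},\hat{\bf x}_C^{[t]})$ and $Z=({\bf x}_A^{[t_A]},\hat{\bf x}_C^{[t-1]})$ this reproduces the $\alpha_A^t$ terms of \eqref{indivrew}. For the target term I apply the same equilibrium characterization to the expert who targets $A$, namely $C$ (with source $B$): $\hat{\bf p}_{A^t\leftarrow C}=\mathds{Q}(\hat X_A^t\mid{\bf x}_C^{[t_C]},\hat{\bf x}_B^{[t-1]})$ and $\hat{\bf p}^+_{A^t\leftarrow C}=\mathds{Q}(\hat X_A^t\mid{\bf x}_C^{[t_C]},\hat{\bf x}_B^{[t]})$. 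The difference $S_\text{log}(\hat x_A^t,\hat{\bf p}^+_{A^t\leftarrow C})-S_\text{log}(\hat x_A^t,\hat{\bf p}_{A^t\leftarrow C})$ is then exactly the right-hand side of the identity \eqref{logscore} with $Y=\hat X_A^t$, $X=\hat X_B^t$ (the one new report revealed between the two predictions) and $Z=({\bf X}_C^{[t_C]},\hat{\bf X}_B^{[t-1]})$, so its expectation equals $I(\hat X_B^t;\hat X_A^t\mid{\bf X}_C^{[t_C]},\hat{\bf X}_B^{[t-1]})$, i.e.\ the $\beta_A^t$ term of \eqref{indivrew}. Summing over $t\in[T]$ gives $\mathds{E}[r_A]$, and repeating verbatim for $B$ (expert with target $C$, source $A$) and $C$ (expert with target $A$, source $B$) under the cyclic role assignment and adding the three expressions gives the aggregate formula. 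It is worth flagging that, because H-DIPP pays the source nothing, none of the negative-entropy expert terms is cancelled by a matching positive term as happens in TDPP, which is exactly why the aggregate welfare is not a plain sum of conditional mutual informations.

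The main obstacle I anticipate is the bookkeeping in the first step: correctly identifying, for each of the two predictions on each criterion $t$, exactly which variables the relevant expert conditions on — effort-informed signals versus best-guess functions of them, the source's reports truncated at $t-1$ versus at $t$ — and keeping the true effort-informed signals ${\bf X}_k^{[t_k]}$ cleanly separated from the reported signals $\hat{\bf X}_k^{[T]}$ throughout. The one conceptual point to get right there is that a reviewer's own predictions influence only their own expert payment (the target/source incentive couplings touch other agents' payments), so strict properness of $S_\text{log}$ pins the equilibrium predictions down; once the conditioning sets are fixed, only the two standard identities (negative entropy as expected self-log-loss, and \eqref{logscore}) remain.
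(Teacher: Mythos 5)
Your proposal is correct and follows essentially the same route as the paper's proof: strict properness of $S_\text{log}$ pins down the equilibrium predictions as the true posteriors under $\mathds{Q}$, the expert terms then collapse to negative conditional entropies, and the target difference becomes a conditional mutual information via the identity in Equation \eqref{logscore}, after which one sums over criteria and reviewers. Your closing remark about the absence of a cancelling source payment matches the paper's own discussion of why the aggregate welfare retains the extra entropy terms.
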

 
 We make several observations. First, unlike the ideal-HMIP proposal (and like the concrete single-HMIM proposal) in \cite{kong2018eliciting}, we cannot claim that truthful reporting of private signals is a dominant strategy, since a reviewer's target payment for reporting private scores will depend on their peer's expert prediction strategy. However, it is the case that when we compare equilibria fixing (truthful or untruthful) peer strategies, a reviewer earns higher target rewards in the equilibria where they report truthfully, compared to ones where they strategically manipulate the score, since $I\left(\hat{X}_B^t; \hat{ X}_A^t | {\bf X}_C^{[t_C]}, \hat{\bf X}_B^{[t-1]}\right) \leq I\left(\hat{X}_B^t; { X}_A^t | {\bf X}_C^{[t_C]}, \hat{\bf X}_B^{[t-1]}\right)$ for all $t \in [T]$ (and similarly for other reviewers) due to the data processing inequality \citep{kong2019information}. 
 
 Second, unlike TDPP (which does not account for effort), we cannot guarantee the strong truthfulness of the fully informative truth-telling equilibrium, i.e., we cannot guarantee that the fully informative equilibrium has higher aggregate utility than any other equilibrium. It may be possible for a reviewer to strategically manipulate their private signals so that when used as a source for some expert's prediction, the target receives higher payment; e.g., it may be possible that reviewer $B$'s strategic reports provides a higher target payment for reviewer $A$ with $I\left({X}_B^t; \hat{ X}_A^t | {\bf X}_C^{[t_C]}, {\bf X}_B^{[t-1]}\right) \leq I\left(\hat{X}_B^t; \hat{ X}_A^t | {\bf X}_C^{[t_C]}, \hat{\bf X}_B^{[t-1]}\right)$. Reviewer $B$'s strategic reports could then increase reviewer $A$'s target payment by more than it hurts their own target payment, thereby raising aggregate utility (costs are the same). However, reviewers do not receive payments for their role as source, and further must accept lower target payment for such strategic manipulation, so reviewers are disincentivized from pursuing such equilibria unilaterally. Nevertheless, we cannot rule out the possibility of viable collusive equilibria. 

 \subsection{Sample Mechanism Flow}
 
 Here, we walkthrough an example of how the H-DIPP reviewing mechanism would work according to Algorithm \ref{algo:duplicate2}. We emphasize that this is just an example; the number of questions, the response scale, etc. may be modified in practice.
 
 Suppose reviewers Alice, Bob, and Charlie are assigned to review paper $\omega$. Each reviewer reads and scores the paper on the same five pre-defined criteria on a five-point scale: 1) \emph{(presentation)} ``How clear is the writing/exposition?''; 2) \emph{(completeness)} ``How easy would it be to reproduce major results?''; 3) \emph{(correctness)} ``How accurate are the technical claims and methodology?''; 4) \emph{(contribution)} ``How interesting and valuable of a contribution is the work? ''; 5) \emph{(overall quality)} ``In which quintile of submitted papers would this work fall?" 
 
Once all the reviews are in, each of the reviewers logs into the review website and plays an asynchronous prediction game. We consider the game from Alice's perspective, who is tasked with predicting Bob's review, and shown Charlie's scores for assistance. Bob will be tasked with predicting Charlie's scores, with Alice's scores for assistance, and Charlie will be tasked with predicting Alice's scores, with Bob's scores for assistance. The game from Alice's perspective proceeds as follows:

\begin{enumerate}
    \item Q1: Presentation
    \begin{itemize}
        \item Alice makes a prediction of Bob's score on Q1: presentation.
        \item Alice is shown Charlie's score on Q1: presentation.
        \item Alice makes an updated prediction of Bob's score on Q1: presentation.
    \end{itemize}
    \item Q2: Completeness
    \begin{itemize}
        \item Alice makes a prediction of Bob's score on Q2: completeness.
        \item Alice is shown Charlie's score on Q2: completeness.
        \item Alice makes an updated prediction of Bob's score on Q2: completeness.
    \end{itemize}
    \item Q3: Correctness
    \begin{itemize}
        \item Alice makes a prediction of Bob's score on Q3: correctness.
        \item Alice is shown Charlie's score on Q3: correctness.
        \item Alice makes an updated prediction of Bob's score on Q3: correctness.
    \end{itemize}
    \item Q4: Contribution
    \begin{itemize}
        \item Alice makes a prediction of Bob's score on Q4: contribution.
        \item Alice is shown Charlie's score on Q4: contribution.
        \item Alice makes an updated prediction of Bob's score on Q4: contribution.
    \end{itemize}
    \item Q5: Overall Quality
    \begin{itemize}
        \item Alice makes a prediction of Bob's score on Q5: overall quality.
        \item Alice is shown Charlie's score on Q5: overall quality.
        \item Alice makes an updated prediction of Bob's score on Q5: overall quality.
    \end{itemize}
\end{enumerate}

The game then concludes for Alice. Once Bob and Charlie have also completed the game, the mechanism may determine payments as per the H-DIPP payment scheme.

  \section{Demand and Supply} \label{demsup}

In Section \ref{stage1}, we assumed the VCG auction had a fixed number of review slots $P$ to be auctioned. Here, we present the considerations and challenges in designing a mechanism that sets $P$ dynamically \emph{after} paper submissions based on the supply of reviewers, but a comprehensive solution is a task for future work. The foremost considerations in setting $P$ based on the availability of reviewers are:

\begin{enumerate}
    \item {\bf Individual Rationality and Truthfulness}: Authors and reviewers should not expect to lose credits by participating the mechanism, and authors should not be able to game the number of review slots by bidding something other than their expected value for having their paper reviewed.
    \item {\bf Balanced Budget}: We aim to set $P$ based on the number of reviews that can be paid for under the H-DIPP payment scheme using revenue from the auction. Recall that the cost to the mechanism to have one paper reviewed by \emph{a given reviewer triplet} in the fully-informative equilibrium (if this is indeed the observed equilibrium) is given by Lemma \ref{revwelf}, where hyperparameters have been set using Theorem \ref{optparam}. 
    
    \item {\bf Maximize $P$}: We would like to accept as many papers for review as feasible.
    
    \item {\bf Fair Expertise-based Matching}: Reviewers must be well-qualified to review their assigned papers and have the appropriate domain expertise. Additionally, all papers accepted for review should have equal consideration when being assigned reviewers; the bid accompanying a paper should have no bearing on the quality of reviewer matches\footnote{Aside from the motivation of fairness, if higher bidding papers received priority in matching reviewers, reviewers could infer a paper's bid from the quality of their match, giving them a low-effort signal of paper quality.}. Thus, although the review slots are identical from authors' perspective (for fairness), the mechanism must match them to reviewers appropriately. We assume access to a black-box algorithm that can score a reviewer's match with a paper, and assign reviewers to papers based on this score (e.g. Toronto Paper Matching System \citep{charlin2013toronto}). 
\end{enumerate}

 The final consideration comes from the observation that the mechanism should not blindly seek to minimize its total cost over all papers in an attempt to squeeze in as many review acceptances $P$ as possible. The cost to the mechanism for reviewing a single paper depends both on the hyperparameters (optimally set via Theorem \ref{optparam}) as well as the mutual information in reviewers' signals, so total costs can be made abnormally low by badly matching reviewers and paying them little since their signals will have naturally low mutual information. Heuristically, we expect that well-matched expert reviewers hit the sweet spot of having low effort costs to review a paper and more mutually informative review scores with expert peers, so the mechanism pays a fair price for such expert reviews. 
 
Nevertheless, we face a trade-off where accepting more papers for review can decrease the average quality of the paper-reviewer matches. 
Formally, let $S_{i_k}^*$ denote the match score for the $k$-th reviewer of paper $\omega_i$ and $R_{i_k}^*$ denote the expected payment to the $k$-th reviewer of paper $\omega_i$ under the optimal assignment of reviewers (as determined by the black-box algorithm) for some given number of papers $P$ accepted for review. Then, a naive mechanism designer's objective is to pick the number of review slots to be the largest $P$ with an acceptable average reviewer match score that also satisfies the balanced budget constraint (where slots are priced as in the VCG auction):
\begin{equation}\small
    \begin{split}
        &\underset{P}{\text{argmax}} \,\,\,\, \left(\frac1{P} \sum_{i=1}^P S_{i_A}^* + S_{i_B}^* + S_{i_C}^* \right) + \lambda P\\
    &\text{s.t. } \,\,\,\, b^*_{P+1} \cdot P \geq \sum_{i=1}^P R_{i_A}^* + R_{i_B}^* + R_{i_C}^*
    \end{split}
\end{equation}

The parameter $\lambda$ controls the tradeoff between match quality and number of papers accepted for review, and is a design decision to be made by individual conferences. The objective can be modified if desired; substituting the mean reviewer match score for the median or an additional term penalizing variance in reviewer match scores are alternative choices. In a naive algorithm, we can find the optimal number of review slots by first computing all the valid number of review slots under the balanced budget constraint, and picking the largest one with an acceptable average reviewer match score (summarized in  Algorithm \ref{algo:demsup}).

\begin{algorithm}[h]
\DontPrintSemicolon
\KwIn{$N$ papers submitted to the conference, a black-box algorithm matching reviewers to papers based on expertise (e.g. Toronto Paper Matching System)}
\KwOut{Optimal Number of Review Slots $P$} 
\For{$n = 1:N$}{
Obtain optimal matching of reviewers to $n$ papers from black-box algorithm \\
If accepting $n$ papers satisfies the budget constraint $b^*_{P+1} \cdot P \geq \sum_{i=1}^P R_{i_A}^* + R_{i_B}^* + R_{i_C}^*$, add $p$ to the list of allowed number of review slots $\mathcal{P} \leftarrow \mathcal{P} \cup \{n\}$
}
Select the largest $P \in \mathcal{P}$ with an acceptable average reviewer match score, i.e., largest $P$ that maximizes the objective $\left(\frac1{P} \sum_{i=1}^P S_{i_A}^* + S_{i_B}^* + S_{i_C}^* \right) + \lambda P$
\caption{Naive Algorithm to Determine Number of Review Slots $P$}
\label{algo:demsup}
\end{algorithm}

However, such a mechanism faces several challenges. Firstly, it is not obvious that it is truthful; authors may be able to affect the number of review slots by changing their bids. Secondly, we need to identify the mechanism hyperparameters by solving the linear program in Theorem \ref{optparam} and computing the expected payment in the fully informative equilibrium (Lemma \ref{revwelf}); this requires estimating or eliciting information about the common prior $\mathds{P}$ and the reviewer triplet's effort cost functions $e_k(\cdot)$. Furthermore, in practice we may need a single set of mechanism hyperparameters $\{\boldsymbol{\alpha}^{[T]}, \boldsymbol{\beta}^{[T]} \}$ that determine payments to all reviewers. Lastly, when the number of review slots $P$ is fixed, authors' bids will not affect their utilities in the reviewing stage of the same conference since the hyperparameters will have been pre-determined; this allows us to analyze the two stages of the mechanism independently. However, endogenizing $P$ would mean the setting of hyperparameters (and hence payments to reviewers) is sensitive to the revenue raised (and hence bids made by authors). This coupling of the two stages complicates the analysis. These challenges are important directions for future work. 


\section{Directions for Future Work} \label{conc}

We present some important directions for future work below:

\begin{enumerate}

\item {\bf Criteria Hierarchy}: What is the best construction of the criteria hierarchy? 

\item  {\bf Estimating Hyperparameters and Costs}: Setting mechanism hyperparameters and the number of review slots $P$ requires information about reviewers' marginal payments and costs, so identifying ways to learn \citep{liu2016learning} or elicit this information would be valuable.

\item {\bf Endogenizing the Number of Review Slots $P$}: How can we allow $P$ to be set based on the availability of reviewers, while guaranteeing an individually rational, truthful, and balanced budget mechanism that accepts as many well-matched papers as possible?

    \item {\bf Variance in Payments}: Is the variance in payments so large  that honest and effortful reviews get negative payments with unacceptable regularity? If so, can the addition of a few more reviewers reduce this variance?
    
     \item {\bf Improving Equilibria}: Can we leverage information across reviewing tasks \citep{dasgupta2013crowdsourced, agarwal2017peer} in an efficient way to further increase total payoffs in the truth-telling equilibrium?  How does the utility of untruthful equilibria compare to the fully-informative equilibrium?  How robust is the mechanism to collusion?

      \item {\bf Truthful, but Useful?}: H-DIPP is strictly \emph{truthful}, but whether or not this is useful is a different empirical question. If H-DIPP-incentivized review scores vary widely even with post-processing (i.e., truthful reviews end up being quite idiosyncratically subjective), or do not track other measures of paper quality, it may not be possible to extract a meaningful signal (criticism C4). Running small scale experiments with real money could shed light on whether the benefits of our proposed mechanism warrant the overhead.

\end{enumerate}

\section{Proofs}\label{proofs}

 \paragraph{Theorem \ref{truththm}}(Strictly Truthful)
 \emph{Suppose reviewers $A, B, C$ complete $[t_A], [t_B]$, and $[t_C]$ criteria respectively. Then, on shared complete criteria $[t'] = [\min(t_A, t_B, t_C)]$, the H-DIPP mechanism is strictly truthful, i.e., all reviewers honestly reporting their true predictions and private review scores on shared complete criteria $[t']$ is a strict Bayesian Nash equilibrium.}
 
\begin{proof}[Proof of Theorem \ref{truththm}]
We adapt the proof of strict truthfulness of TDPP by \citet{schoenebeck2020two} for H-DIPP. We show that agents reporting their honest predictions and scores on shared complete criteria is a strict Bayesian Nash equilibrium; importantly, payments for each criterion are independent and can be strategized about independently, hence we can reason strictly about strategies on the shared complete criteria ${\bf x}_k^{[t']} \subseteq {\bf x}_k^{[t_k]}$. Reviewer $A$'s private signals on incomplete criteria $t > t_A$ are ${\bf x}_A^{t} = g_A^{t|t_A}({\bf x}_A^{[t_A]})$ for some best guess function $g$, and similarly for reviewers $B$ and $C$.
Since all reviewers play the role of expert, target, and source, we only analyze reviewer $A$'s incentives when peers $B$ and $C$ report truthfully; these incentives are symmetrical for all reviewers. Additionally, we hold the number of completed criteria and hence effort cost $e_k(t_k)$ constant, and look to optimize payoff under this condition.

Suppose reviewers $B$ and $C$ report their private signals truthfully, i.e., $\hat{x}_B^t = x_B^t$ and $\hat{x}_C^t = x_C^t$, and reviewer $C$ as expert truthfully predicts reviewer $A$'s score as the posterior from updating the common prior $\hat{\bf p}_{A^t \leftarrow C} = \mathds{P}(X_A^t | {\bf x}_C^{[t']}, {\bf x}_B^{[t-1]})$, and $\hat{\bf p}^+_{A^t \leftarrow C} = \mathds{P}(X_A^t | {\bf x}_C^{[t']}, {\bf x}_B^{[t]})$.

Now, reviewer $A$'s expected payment (taken with respect to the joint distribution $\mathds{Q}(\hat{\bf X}^{[T]}_A, {\bf X}^{[t']}_A ,{\bf X}^{[t']}_B,{\bf X}^{[t']}_C)$ of true signals and $A$'s potentially strategic reports) on the shared complete criteria can be written as:

\begin{equation} \label{eq:totrew1}\small
\begin{split}
    \mathds{E}_{\mathds{Q}}[r_{A^t}]
    &= \alpha_A^t\mathds{E}\left[\underbrace{[S_\text{log}({x}_B^t, \hat{\bf p}_{B^t \leftarrow A}) + S_\text{log}({x}_B^t, \hat{\bf p}^+_{B^t \leftarrow A})}_{\text{expert payment}}\right] +  \beta_A^t\mathds{E}\left[\underbrace{S_\text{log}(\hat{x}_A^t, \mathds{P}(X_A^t | {\bf x}_C^{[t']}, {\bf x}_B^{[t]})) - S_\text{log}(\hat{x}_A^t, \mathds{P}(X_A^t | {\bf x}_C^{[t']}, {\bf x}_B^{[t-1]}))}_{\text{target payment}}\right] 
    \end{split}
\end{equation}

Now, reviewer $A$'s payment on a given criterion $t$ is independent of their reported signals and predictions on other criteria, so the sum of payments on complete criteria can be maximized by maximizing the expert payment and target payment on each criterion: 

\emph{Expert Payment} \space\space The expected expert payment consists of  reviewer $A$'s predictions of their target $B$'s signal on each criterion $t$ first with the source $C$'s signals on criteria less than $t$ and then with $C$'s signal on criterion $t$ as well. Predictions on each criterion are scored independently, so a strategic prediction on one criterion cannot impact future payments on other criteria. Since we use a strictly proper scoring rule to score both predictions on each criterion, reviewer $A$'s expected payoff for each criterion is strictly maximized when they report their true posterior belief over their target $B$'s signals under the common prior, i.e., $\hat{\bf p}^+_{B^t \leftarrow A} = \mathds{P}(X_B^t | {\bf x}_A^{[t_A]}, {\bf x}_C^{[t]})$ and $\hat{\bf p}_{B^t \leftarrow A} = \mathds{P}(X_B^t | {\bf x}_A^{[t_A]}, {\bf x}_C^{[t-1]})$.

\emph{Target Payment} \space\space Now, consider the reviewer $A$'s expected target payment on criteria $t \leq t'$. We  observe that we can model the strategies for responses to different criteria independently since the target payments are also independent for different criteria. Let $\hat{x}_A^t = \theta_{A^t}({\bf x}_A^{[t_A]})$ be reviewer $A$'s \emph{deterministic best response} report (on criterion $t$) to reviewer $B$'s honest signal reports $\hat{\bf x}_B^{[T]}$ and reviewer $C$'s honest (posterior) predictions $\hat{\bf p}^+_{A^t \leftarrow C} = \mathds{P}({X}_A^t | {\bf x}_C^{[t_C]}, {\bf x}_B^{[t]})$, $\hat{\bf p}_{A^t \leftarrow C} = \mathds{P}({X}_A^t | {\bf x}_C^{[t_C]}, {\bf x}_B^{[t-1]})$. 

For $t \leq t'$, reviewer $A$'s expected target payment for playing this strategy $\hat{x}_A^t =\theta_{A^t}({\bf x}_A^{[t']})$ is:

\begin{equation}\small
\begin{split}
    u_t(\theta) :&= \mathds{E}\left[S_\text{log}(\hat{x}_A^t, \mathds{P}(X_A^t | {\bf x}_C^{[t_C]}, {\bf x}_B^{[t]})) - S_\text{log}(\hat{x}_A^t, \mathds{P}(X_A^t | {\bf x}_C^{[t_C]}, {\bf x}_B^{[t-1]}))\right] \\ 
    &= \mathds{E}_{\mathds{Q}}\left[\log\frac{\mathds{P}(\hat{x}_A^t | {\bf x}_C^{[t_C]}, {\bf x}_B^{[t]})}{\mathds{P}(\hat{x}_A^t | {\bf x}_C^{[t_C]}, {\bf x}_B^{[t-1]})}\right] \\
    &= \mathds{E}_{\mathds{Q}}\left[\log\frac{\mathds{P}(\hat{x}_A^t, x_B^t | {\bf x}_C^{[t_C]}, {\bf x}_B^{[t-1]})}{\mathds{P}(\hat{x}_A^t | {\bf x}_C^{[t_C]}, {\bf x}_B^{[t-1]}) \mathds{P}(x_B^t|{\bf x}_C^{[t']}, {\bf x}_B^{[t-1]})}\right] \\
    &= \mathds{E}_{\mathds{P}}\left[\log\frac{\mathds{P}({x}_B^t | {\bf x}_C^{[t_C]}, {\bf x}_B^{[t-1]}, \theta_{A^t}({\bf x}_A^{[t_A]}
    ))}{\mathds{P}(x_B^t | {\bf x}_C^{[t_C]}, {\bf x}_B^{[t-1]})}\right] 
    \end{split}
\end{equation}

By a similar calculation, we have that reviewer $A$'s expected payment under a truth-telling strategy is $u_t(\tau):=\mathds{E}_{\mathds{P}}\left[\log\frac{\mathds{P}({x}_B^t | {\bf x}_C^{[t_C]}, {\bf x}_B^{[t-1]}, {x}_A^t)}{\mathds{P}(x_B^t | {\bf x}_C^{[t_C]}, {\bf x}_B^{[t-1]})}\right] $. Then, the difference between truth-telling payment and the best response payment is:
\begin{equation}
    \begin{split}
        u_t(\tau) - u_t(\theta)  &= \mathds{E}_{\mathds{P}}\left[\log\frac{\mathds{P}({x}_B^t | {\bf x}_C^{[t_C]}, {\bf x}_B^{[t-1]}, {x}_A^t)}{\mathds{P}({x}_B^t | {\bf x}_C^{[t_C]}, {\bf x}_B^{[t-1]}, \theta_{A^t}({\bf x}_A^{[t_A]}))}\right] \\
        &= \mathds{E}_{{\bf X}_A^{[t_A]}, {\bf X}_B^{[t-1]}, {\bf X}_C^{[t_C]}}\left[\mathds{E}_{X^{t}}\left[\log\frac{\mathds{P}({x}_B^t | {\bf x}_C^{[t_C]}, {\bf x}_B^{[t-1]}, {x}_A^t)}{\mathds{P}({\bf x}_B^t | {\bf x}_C^{[t_C]}, {\bf x}_B^{[t-1]}, \theta_{A^t}({\bf x}_A^{[t_A]}))}\right]\left| {\bf X}_A^{[t_A]} = {\bf x}_A^{[t_A]},  {\bf X}_B^{[t-1]} = {\bf x}_B^{[t-1]}, {\bf X}_C^{[t_C]} = {\bf x}_C^{[t_C]} \right.\right] \\
        &= \mathds{E}_{{\bf X}_A^{[t_A]}, {\bf X}_B^{[t-1]}, {\bf X}_C^{[t_C]}}\left[\text{KL}\left({\mathds{P}({x}_B^t | x_C^{[t_C]}, x_B^{[t-1]}, {x}_A^t)}||{\mathds{P}({x}_B^t | x_C^{[t_C]}, x_B^{[t-1]}, \theta_{A^t}({\bf x}_A^{[t_A]}))}\right)\right]
    \end{split}
\end{equation}

Since KL is a divergence, it is greater than or equal to zero, so we must have $u_t(\tau) - u_t(\theta) \geq 0$. Finally, observe that if the strategy $\theta$ is not truthful, i.e., $\theta_{A^t}({\bf x}_A^{[t_A]}) \neq {x}_A^t$, then by the second-order stochastic relevance there is some $x_C^{t} \subset {\bf x}_C^{[t_C]}$ for which ${\mathds{P}({x}_B^t | {\bf x}_C^{[t_C]}, {\bf x}_B^{[t-1]}, {x}_A^t)} \neq {\mathds{P}({x}_B^t | {\bf x}_C^{[t_C]}, {\bf x}_B^{[t-1]}, \hat{x}_A^t)}$, so the KL divergence is not identically zero over all signals in the expectation. Consequently, the inequality is strict, implying $u(\tau) > u(\theta)$. Thus, the best response strategy is truthtelling ($\theta=\tau$), and reviewer $A$ strictly maximizes their target payment by truthfully reporting their signal $\hat{x}_A^t = x_A^t$ on all complete criteria.

We have shown that when peers report their predictions and private review scores truthfully, a reviewer should truthfully report their prediction to strictly maximize their expert payment and truthfully report their private review score to strictly maximize their target payment. These incentives are symmetrical for reviewers $B$ and $C$, so truthful reporting is a strict Bayesian Nash Equilibrium.
\end{proof}

\paragraph{Lemma \ref{margrewlemma}}
\emph{Suppose reviewers $B, C$ complete all criteria and truthfully report their private review scores and predictions. Then, reviewer $A$'s marginal expected payment for completing criterion $t_A$ and honestly reporting private review scores and predictions is $\Delta R_A({t_A}) := R_A(t_A) - R_A(t_A-1)   =  \Delta {R}_A^{\text{expert}}(t_A) + \Delta{R}_A^{\text{target}}(t_A)$, where:}
\begin{enumerate}\small
\item $\Delta R_A^{\text{expert}}(t_A) := {R}_A^{\text{expert}}(t_A) - {R}_A^{\text{expert}}(t_A - 1) = \sum_{t=1}^T \alpha_A^t \Delta R_{A^t}^{{\text{ expert}}}(t_A)$
\item $\Delta R_A^{\text{target}}(t_A) := {R}_A^\text{target}(t_A) - {R}_A^{\text{target}}(t_A - 1)=\sum_{t=1}^T \beta_A^t \Delta R_{A^t}^\text{target}(t_A)$
    \item $\Delta R_{A^t}^{\text{expert}}(t_A):=  I \left(X_B^t; X_A^{t_A}|{\bf X}_A^{[t_A-1]}, {\bf X}_C^{[t]}\right)  + I\left(X_B^t; X_A^{t_A} | {\bf X}_A^{[t_A-1]}, {\bf X}_C^{[t-1]}\right)$
    \item $\Delta R_{A^t}^\text{target}(t_A) := \begin{cases}
    0, & t < t_A \\
     I\left({X}_B^{t_A}; {X}_A^{t_A} | {\bf X}_C^{[T]}, {\bf X}_B^{[t_A-1]}\right) - I\left({X}_B^{t_A}; g_A^{t_A|t_A-1}({\bf X}_A^{[t_A-1]}) | {\bf X}_C^{[T]}, {\bf X}_B^{[t_A-1]}\right), & t=t_A \\ 
     \mathds{E}\left[\text{KL}\left({\mathds{P}({X}_B^t| g_A^{t|t_A}({\bf X}_A^{[t_A]}), {\bf X}_C^{[T]}, {\bf X}_B^{[t-1]}) || \mathds{P}({X}_B^t| g_A^{t|t_A-1}({\bf X}_A^{[t_A-1]}), {\bf X}_C^{[T]}, {\bf X}_B^{[t-1]})}\right) \right], & t > t_A
    \end{cases}$
\end{enumerate}

\emph{Furthermore, the marginal expected expert and target payments on each criterion are always non-negative, i.e., $\Delta R_{A^t}^\text{expert}(t_A), \Delta R_{A^t}^\text{target}(t_A) \geq 0$ for all $t, t_A \in [T]$ and the inequality is strict when $t = t_A$. Similar results hold for reviewers $B$ and $C$ when peers complete all criteria and report truthfully.} 

\begin{proof}
 Assuming peers $B, C$ complete all $T$ criteria and report truthfully, we can write reviewer $A$'s  strategy profile of effort $e_A(t_A)$ and truthful reports as $s_A = \left(e_A(t_A), {\bf x}_k^{[t_A]} \cup \{g_A^{t'|t_A}({\bf  x}_A^{[t_A]})\}_{t' \in [t_A+1, T]}, \{\mathds{P}(X_B^t| {\bf X}_A^{[t_A]}, {\bf X}_C^{[t]}), \mathds{P}(X_B^t| {\bf X}_A^{[t_A]}, {\bf X}_C^{[t-1]})\}_{t \in [T]}\right)$, and the expected payment is:

\begin{equation} \small
\begin{split}
    {R}_A &= \underbrace{\sum_{t=1}^T \alpha_A^t \mathds{E}_{{\bf X}_A^{[t_A]}, X_B^t, {\bf X}_C^{[t]}}\left[ \log\left(\mathds{P}(X_B^t| {\bf X}_A^{[t_A]}, {\bf X}_C^{[t]})\right) +  \log\left(\mathds{P}(X_B^t| {\bf X}_A^{[t_A]}, {\bf X}_C^{[t-1]})\right)\right]}_{{R}_A^\text{expert}} \\
 &\phantom{=} + \underbrace{\sum_{t=1}^T \beta_A^t \mathds{E}_{X_A^{t}, {\bf X}_B^{[t]}, {\bf X}_C^{[T]}}\left[ \log\left(\frac{\mathds{P}(X_A^t|{\bf X}_C^{[T]}, {\bf X}_B^{[t]})}{\mathds{P}(X_A^t|{\bf X}_C^{[T]}, {\bf X}_B^{[t-1]})}\right)\right]}_{{R}_A^\text{target}} 
    \end{split}
\end{equation}

The expected payment $R_A$ is technically a function of all reviewers' completed criteria $R_A(t_A, t_B, t_C)$, but we simply write it as $R_A(t_A)$ as other reviewers' are assumed to have completed $T$ criteria. The marginal payment for completing criterion $t_A$ is $R_A(t_A) - R_A(t_A-1) = \Delta R_A({t_A})  = \Delta {R}_A^\text{expert}(t_A) + \Delta{R}_A^\text{target}(t_A)$. 
The marginal expert payment $\Delta{R}_A^\text{expert}(t_A)$ is:

\begin{equation}\label{margexp}\small
    \begin{split}
        \Delta{R}_A^\text{expert}(t_A) &= {R}_A^\text{expert}(t_A) - {R}_A^\text{expert}(t_A - 1) \\ 
        &=\sum_{t=1}^T \alpha_A^t \left(\mathds{E}_{{\bf X}_A^{[t_A]}, X_B^t, {\bf X}_C^{[t]}}\left[ \log\left(\mathds{P}(X_B^t|{\bf X}_A^{[t_A]}, {\bf X}_C^{[t]})\right) +  \log\left(\mathds{P}(X_B^t|{\bf X}_A^{[t_A]}, {\bf X}_C^{[t-1]})\right)\right]  \right) \\
        &\phantom{=} - \sum_{t=1}^T \alpha_A^t \left(\mathds{E}_{{\bf X}_A^{[t_A-1]}, X_B^t, {\bf X}_C^{[t]}}\left[ \log\left(\mathds{P}(X_B^t| {\bf X}_A^{[t_A-1]}, {\bf X}_C^{[t]})\right) +  \log\left(\mathds{P}(X_B^t| {\bf X}_A^{[t_A-1]}, {\bf X}_C^{[t-1]})\right)\right] \right) \\
        &= \sum_{t=1}^T  \alpha_A^t \left(\mathds{E}_{{\bf X}_A^{[t_A]}, X_B^t, {\bf X}_C^{[t]}}\left[ \log\left(\mathds{P}(X_B^t|{\bf X}_A^{[t_A]}, {\bf X}_C^{[t]})\right) -  \log\left(\mathds{P}(X_B^t| {\bf X}_A^{[t_A-1]}, {\bf X}_C^{[t]})\right)\right]  \right. \\
  &\phantom{=} \left. + \mathds{E}_{{\bf X}_A^{[t_A]}, X_B^t, {\bf X}_C^{[t-1]}}\left[ \log\left(\mathds{P}(X_B^t|{\bf X}_A^{[t_A]}, {\bf X}_C^{[t-1]})\right) -  \log\left(\mathds{P}(X_B^t|{\bf X}_A^{[t_A-1]}, {\bf X}_C^{[t-1]})\right)\right]  \right) \\
  &= \sum_{t=1}^T  \alpha_A^t \left(\mathds{E}_{{\bf X}_A^{[t_A]}, X_B^t, {\bf X}_C^{[t]}}\left[ \log\left(\frac{\mathds{P}(X_B^t|{\bf X}_A^{[t_A]}, {\bf X}_C^{[t]})}{\mathds{P}(X_B^t|{\bf X}_A^{[t_A-1]}, {\bf X}_C^{[t]})}\right)\right] + \mathds{E}_{{\bf X}_A^{[t_A]}, X_B^t, {\bf X}_C^{[t-1]}}\left[ \log\left(\frac{\mathds{P}(X_B^t| {\bf X}_A^{[t_A]}, {\bf X}_C^{[t-1]})}{\mathds{P}(X_B^t|{\bf X}_A^{[t_A-1]}, {\bf X}_C^{[t-1]})}\right)\right]  \right) \\
   &= \sum_{t=1}^T  \alpha_A^t \left(I \left(X_B^t; X_A^{t_A}|{\bf X}_A^{[t_A-1]}, {\bf X}_C^{[t]}\right)  + I\left(X_B^t; X_A^{t_A} | {\bf X}_A^{[t_A-1]}, {\bf X}_C^{[t-1]}\right)\right)
    \end{split}
\end{equation}

Defining $\Delta R_{A^t}^\text{expert}(t_A) := I \left(X_B^t; X_A^{t_A}|{\bf X}_A^{[t_A-1]}, {\bf X}_C^{[t]}\right)  + I\left(X_B^t; X_A^{t_A} | {\bf X}_A^{[t_A-1]}, {\bf X}_C^{[t-1]}\right)$, we have $\Delta R_A^{\text{expert}}(t_A) = \sum_{t=1}^T \alpha_A^t  \Delta R_{A^t}^{{\text{ expert}}}(t_A)$. Next, we compute the marginal target payment as follows (splitting the sum over $T$ criteria into three: marginal target payment on criteria less than $t_A$, criterion $t_A$, and criteria greater than $t_A$):

\begin{equation}\small
\begin{split}
  \Delta{R}_A^\text{target}(t_A) &= {R}_A^\text{target}(t_A) - {R}_A^\text{target}(t_A - 1) \\  
  &=  \sum_{t=1}^{t_A-1} \beta_A^t \left(\mathds{E}\left[ \log\left(\frac{\mathds{P}(X_A^t|{\bf X}_C^{[T]}, {\bf X}_B^{[t]})}{\mathds{P}(X_A^t|{\bf X}_C^{[T]}, {\bf X}_B^{[t-1]})}\right)\right] - \mathds{E}\left[ \log\left(\frac{\mathds{P}(X_A^t|{\bf X}_C^{[T]}, {\bf X}_B^{[t]})}{\mathds{P}(X_A^t| {\bf X}_C^{[T]}, {\bf X}_B^{[t-1]})}\right)\right] \right) \\ 
 &\phantom{=}  +  \beta_{A}^{t_A} \left(\mathds{E}\left[ \log\left(\frac{\mathds{P}(X_A^{t_A}|{\bf X}_C^{[T]}, {\bf X}_B^{[t_A]})}{\mathds{P}(X_A^{t_A}|{\bf X}_C^{[T]}, {\bf X}_B^{[t_A-1]})}\right)\right]  - \mathds{E}\left[ \log\left(\frac{\mathds{P}(g_A^{t_A|t_A-1}({\bf X}_A^{[t_A-1]})|{\bf X}_C^{[T]}, {\bf X}_B^{[t_A]})}{\mathds{P}(g_A^{t_A|t_A-1}({\bf X}_A^{[t_A-1]})|{\bf X}_C^{[T]}, {\bf X}_B^{[t_A-1]})}\right)\right] \right) \\ 
 &\phantom{=} + \sum_{t=t_A+1}^T \beta_A^t \left(\mathds{E}\left[ \log\left(\frac{\mathds{P}(g_A^{t|t_A}({\bf X}_A^{[t_A]})|{\bf X}_C^{[T]}, {\bf X}_B^{[t]})}{\mathds{P}(g_A^{t|t_A}({\bf X}_A^{[t_A]})|{\bf X}_C^{[T]}, {\bf X}_B^{[t-1]})}\right)\right] - \mathds{E}\left[ \log\left(\frac{\mathds{P}(g_A^{t|t_A-1}( {\bf X}_A^{[t_A-1]})| {\bf X}_C^{[T]},  {\bf X}_B^{[t]})}{\mathds{P}(g_A^{t|t_A-1}( {\bf X}_A^{[t_A-1]})|{\bf X}_C^{[T]}, {\bf X}_B^{[t-1]})}\right)\right] \right) \\ 
  &=   \beta_A^{t_A} \left(I\left({X}_B^{t_A}; {X}_A^{t_A} | {\bf X}_C^{[T]}, {\bf X}_B^{[t_A-1]}\right) - I\left({X}_B^{t_A}; g_A^{t_A|t_A-1}({\bf X}_A^{[t_A-1]}) | {\bf X}_C^{[T]}, {\bf X}_B^{[t_A-1]}\right) \right)  \\
  &\phantom{=}  +  \sum_{t=t_A+1}^{T} \beta_A^t \left( \mathds{E}\left[\log\frac{\mathds{P}({X}_B^t| g_A^{t|t_A}({\bf X}_A^{[t_A]}), {\bf X}_C^{[T]}, {\bf X}_B^{[t-1]})}{\mathds{P}({X}_B^t | {\bf X}_C^{[T]}, {\bf X}_B^{[t-1]})} \right]  -  \mathds{E}\left[\log\frac{\mathds{P}({X}_B^t| g_A^{t|t_A-1}({\bf X}_A^{[t_A-1]}), {\bf X}_C^{[T]}, {\bf X}_B^{[t-1]})}{\mathds{P}({X}_B^t | {\bf X}_C^{[T]}, {\bf X}_B^{[t-1]})} \right]\right) \\
  &=   \beta_A^{t_A} \left(I\left({X}_B^{t_A}; {X}_A^{t_A} | {\bf X}_C^{[T]}, {\bf X}_B^{[t_A-1]}\right) - I\left({X}_B^{t_A}; g_A^{t_A|t_A-1}({\bf X}_A^{[t_A-1]}) | {\bf X}_C^{[T]}, {\bf X}_B^{[t_A-1]}\right) \right)  \\
  &\phantom{=} +  \sum_{t=t_A+1}^{T} \beta_A^t\left( \mathds{E}_{ {\bf X}_A^{[t_A]}, {\bf X}_B^{[t]}, {\bf X}_C^{[T]}}\left[\log\frac{\mathds{P}({X}_B^t| g_A^{t|t_A}({\bf X}_A^{[t_A]}), {\bf X}_C^{[T]}, {\bf X}_B^{[t-1]})}{\mathds{P}({X}_B^t| g_A^{t|t_A-1}({\bf X}_A^{[t_A-1]}), {\bf X}_C^{[T]}, {\bf X}_B^{[t-1]})} \right] \right) \\
  &=   \beta_A^{t_A} \left(I\left({X}_B^{t_A}; {X}_A^{t_A} | {\bf X}_C^{[T]}, {\bf X}_B^{[t_A-1]}\right) - I\left({X}_B^{t_A}; g_A^{t_A|t_A-1}({\bf X}_A^{[t_A-1]}) | {\bf X}_C^{[T]}, {\bf X}_B^{[t_A-1]}\right) \right)  \\
  &\phantom{=}  +  \sum_{t=t_A+1}^{T} \beta_A^t\left( \mathds{E}_{{\bf X}_A^{[t_A]}, {\bf X}_B^{[t-1]}, {\bf X}_C^{[T]}}\left[\text{KL}\left({\mathds{P}({X}_B^t| g_A^{t|t_A}({\bf X}_A^{[t_A]}), {\bf X}_C^{[T]}, {\bf X}_B^{[t-1]}) || \mathds{P}({X}_B^t| g_A^{t|t_A-1}({\bf X}_A^{[t_A-1]}), {\bf X}_C^{[T]}, {\bf X}_B^{[t-1]})}\right) \right] \right)
 \end{split}
\end{equation}

Defining $\Delta R_{A^t}^\text{ target}(t_A) = 0$ for $t < t_A$,  $\Delta R_{A^{t_A}}^\text{ target}(t_A) =  I\left({X}_B^{t_A}; {X}_A^{t_A} | {\bf X}_C^{[T]}, {\bf X}_B^{[t_A-1]}\right) - I\left({X}_B^{t_A}; g_A^{t_A|t_A-1}({\bf X}_A^{[t_A-1]}) | {\bf X}_C^{[T]}, {\bf X}_B^{[t_A-1]}\right) $, and  for $t > t_A$ as $\Delta R_{A^t}^\text{target} =  \mathds{E}\left[\text{KL}\left({\mathds{P}({X}_B^t| g_A^{t|t_A}({\bf X}_A^{[t_A]}), {\bf X}_C^{[T]}, {\bf X}_B^{[t-1]}) || \mathds{P}({X}_B^t| g_A^{t|t_A-1}({\bf X}_A^{[t_A-1]}), {\bf X}_C^{[T]}, {\bf X}_B^{[t-1]})}\right) \right]$, we get $\Delta R_A^{\text{target}}(t_A) = \sum_{t=1}^T \beta_A^t \Delta R_{A^t}^\text{target}(t_A)$. Thus, we see that the marginal expected payment is:
\begin{equation}\small \label{margrew}
\begin{split}
    R_A(t_A) - R_A(t_A-1) &= \Delta R_A({t_A})  \\
    &= \Delta {R}_A^\text{expert}(t_A) + \Delta{R}_A^\text{target}(t_A) \\
    &= \sum_{t=1}^T \alpha_A^t \Delta R_{A^t}^\text{expert}(t_A) + \beta_A^t\Delta R_{A^t}^\text{target}(t_A) 
    \end{split}
\end{equation}

Finally we show non-negativity of marginal expected expert and target payments $\Delta R_{k}^\text{expert}, \Delta R_{k}^\text{target} \geq 0$, and positivity on criterion $t_A$ when the marginal criterion completed is $t_A$, i.e., $\Delta {R}_{A^{t_A}}^\text{expert}(t_A) > 0$ and $\Delta {R}_{A^{t_A}}^\text{target}(t_A) > 0$. 

We easily get the non-negativity of the marginal expected expert payment $\Delta R_{A^t}^\text{expert}(t_A) \geq 0$ from the non-negativity of mutual information. Now consider the marginal expected expert payment on criterion $t_A$ in particular ($\Delta {R}_{A^{t_A}}^\text{expert}(t_A)$). By second order stochastic relevance (Assumption \ref{stoch}), for any $x_A^{t_A} \neq \tilde{x}_A^{t_A}$ there is some $x_C^{t_A} \subset {\bf x}_C^{[t_A]}$ such that $\mathds{P}(X_B^{t_A}|x_A^{t_A}, {\bf x}_A^{[t_A-1]}, {\bf x}_C^{[t_A]}) \neq \mathds{P}(X_B^{t_A}|\tilde{x}_A^{t_A}, {\bf x}_A^{[t_A-1]}, {\bf x}_C^{[t_A]}) \Rightarrow  X_B^{t_A} \nCI X_A^{t_A} \mid {\bf X}_A^{[t_A-1]}, {\bf X}_C^{[t_A]}$. Thus, $I\left(X_B^{t_A}; X_A^{t_A} | {\bf X}_A^{[t_A-1]}, {\bf X}_C^{[t_A]}\right) > 0$. This gives us that $\Delta {R}_{A^{t_A}}^\text{expert}(t_A) > 0$ for all $t_A \in [T]$.

Next, we consider the marginal expected target payment.  By the non-negativity of mutual information and KL divergence, we have that $\Delta R_{A^t}^\text{target}(t_A) \geq 0$. We show that the inequality is strict when $t=t_A$, i.e.,
 $\Delta R_{A^{t_A}}^\text{target}(t_A) = I\left({X}_B^{t_A}; {X}_A^{t_A} | {\bf X}_C^{[T]}, {\bf X}_B^{[t_A-1]}\right) - I\left({X}_B^{t_A}; g_A^{t_A|t_A-1}({\bf X}_A^{[t_A-1]}) | {\bf X}_C^{[T]}, {\bf X}_B^{[t_A-1]}\right)  > 0$ as follows: 

\begin{equation} \small
\begin{split}
    I\left({X}_B^{t_A}; g_A^{t_A|t_A-1}({\bf X}_A^{[t_A-1]}) | {\bf X}_C^{[T]}, {\bf X}_B^{[t_A-1]}\right) &\leq I\left({X}_B^{t_A}; {\bf X}_A^{[t_A-1]} | {\bf X}_C^{[T]}, {\bf X}_B^{[t_A-1]}\right) \\
    &= I\left({X}_B^{t_A}; {\bf X}_A^{[T]} | {\bf X}_C^{[T]}, {\bf X}_B^{[t_A-1]}\right) -  I\left({ X}_B^{t_A}; {\bf X}_A^{[t_A, T]} | {\bf X}_A^{[t_A-1]}, {\bf X}_C^{[T]}, {\bf X}_B^{[t_A-1]}\right) \\
    &= I\left({X}_B^{t_A}; {X}_A^{t_A} | {\bf X}_C^{[T]}, {\bf X}_B^{[t_A-1]}\right) -  I\left({X}_B^{t_A}; {\bf X}_A^{[t_A, T]} | {\bf X}_A^{[t_A-1]}, {\bf X}_C^{[T]}, {\bf X}_B^{[t_A-1]}\right) \\
    &< I\left({X}_B^{t_A}; {X}_A^{t_A} | {\bf X}_C^{[T]}, {\bf X}_B^{[t_A-1]}\right)
\end{split}
\end{equation}

For the first inequality, we used the information monotonicity of mutual information. For the equality in the second step, we used the chain rule of mutual information. For the equality in the third step, we use the conditional independence assumption (Assumption \ref{condindep}). For the final inequality, we used the fact that second-order stochastic relevance (Assumption \ref{stoch}) implies that for any $x_A^{t_A} \neq \tilde{x}_A^{t_A}$ there is some $x_C^{t_A} \subset {\bf x}_C^{[T]}$ such that $\mathds{P}(X_B^{t_A}|x_A^{t_A}, {\bf x}_A^{[t_A-1]}, {\bf x}_C^{[T]}, {\bf x}_B^{[t_A-1]}) \neq \mathds{P}(X_B^{t_A}|\tilde{x}_A^{t_A}, {\bf x}_A^{[t_A-1]}, {\bf x}_C^{[T]}, {\bf x}_B^{[t_A-1]}) \Rightarrow  X_B^{t_A} \nCI X_A^{t_A} \mid {\bf X}_A^{[t_A-1]}, {\bf X}_C^{[T]}, {\bf X}_B^{[t_A-1]} \Rightarrow I\left(X_B^{t_A}; {\bf X}_A^{[t_A, T]} | {\bf X}_A^{[t_A-1]}, {\bf X}_C^{[T]}, {\bf X}_B^{[t_A-a]}\right) > 0$. 

The above derivation applies similarly for reviewer $B$ when peers $A, C$ complete all criteria report truthfully, and for reviewer $C$ when peers $A, B$ complete all criteria and report truthfully.
\end{proof}

\paragraph{Theorem \ref{optparam}}
\emph{Suppose mechanism hyperparameters $\{  \boldsymbol{\alpha}_A^{[T]}, \boldsymbol{\beta}_A^{[T]}, \boldsymbol{\alpha}_B^{[T]}, \boldsymbol{\beta}_B^{[T]}, \boldsymbol{\alpha}_C^{[T]}, \boldsymbol{\beta}_C^{[T]}\}$ are solutions to the following linear program:}
\begin{equation}\small
\begin{split}
    &\min \sum_{k \in \{A, B, C \}} \sum_{t=1}^T \alpha_k^t R_{k^t}^\text{expert}(T) +  \beta_k^t R_{k^t}^\text{target}(T)\\
    &\text{s.t. } \sum_{t=1}^T \alpha_A^t \Delta R_{A^t}^\text{expert}(t_A) + \beta_A^t\Delta R_{A^t}^\text{target}(t_A) > \Delta e_A(t_A) \text{ for } 1 \leq t_A \leq T \\
    &\phantom{\text{s.t. }} \sum_{t=1}^T \alpha_B^t \Delta R_{B^t}^\text{expert}(t_B) + \beta_B^t\Delta R_{B^t}^\text{target}(t_B) > \Delta e_B(t_B) \text{ for } 1 \leq t_B \leq T \\
    &\phantom{\text{s.t. }} \sum_{t=1}^T \alpha_C^t \Delta R_{C^t}^\text{expert}(t_C) + \beta_C^t\Delta R_{C^t}^\text{target}(t_C) > \Delta e_C(t_C) \text{ for } 1 \leq t_C \leq T \\
    &\phantom{\text{s.t. }} \alpha_k^t, \beta_k^t \geq 0
\end{split}
\end{equation}

\emph{Then, for any reviewer $k$, if their peers complete all criteria and report predictions and review scores truthfully, reviewer $k$ strictly maximizes their expected utility by also completing all criteria and reporting predictions and review scores truthfully. In other words, completing all criteria and reporting truthfully is a strict Bayesian-Nash equilibrium. We refer to this equilibrium as the fully-informative equilibrium.}

\begin{proof}
We require the hyperparameters to be non-negative so reviewers are paymented and not penalized for maximizing their expert and target payments. When reviewer $k$'s peers are completing all criteria and reporting truthfully, we want the mechanism to pay enough to cover reviewer $k$'s cost of effort and incentivize them to complete all criteria and report truthfully as well. By Theorem \ref{truththm}, we know that truthful reporting on shared completed criteria is a strict BNE, so when a reviewer $k$'s peers complete all criteria and report truthfully, every additional criterion completed by reviewer $k$ becomes a shared complete criterion and their payoff is maximized by truthfully reporting the review score on the marginal criterion. Thus, the question is simply how to set the hyperparameters to incentivize completion of all criteria (when peers complete all criteria and report truthfully). We analyse reviewer $A$'s incentives when reviewers $B, C$ complete all criteria and report truthfully, and this applies symmetrically to reviewers $B, C$. 

Reviewer $A$'s expected utility for completing $t_A$ criteria and reporting signals truthfully is $U_A = R_A(t_A) - e_A(t_A)$, so to ensure that the mechanism always covers the cost of effort, we want the marginal utility of completing an additional criterion $t_A$ to be positive, i.e., $R_A(t_A) - e_A(t_A) > R_A(t_A - 1) - e_A(t_A-1)$ for all $1 \leq t_A \leq T$. Rearranging this, we can write $R_A(t_A) - R_A(t_A - 1) > e_A(t_A) - e_A(t_A-1) \Rightarrow \Delta R_A(t_A) > \Delta e_A(t_A)$. $\Delta e_A(t_A)$ is the marginal cost of completing criterion $t_A$, and by Equation \ref{margrew}, the marginal expected payment for completing criterion $t_A$ when peers complete all criteria is $\Delta R_A(t_A) =  \sum_{t=1}^T \alpha_A^t \Delta R_{A^t}^\text{expert}(t_A) + \beta_A^t\Delta R_{A^t}^\text{target}(t_A)$. Thus, if the hyperparameters $\{\boldsymbol{\alpha}_A^{[T]}, \boldsymbol{\beta}_A^{[T]}\}$ satisfy $\sum_{t=1}^T \alpha_A^t \Delta R_{A^t}^\text{expert}(t_A) + \beta_A^t\Delta R_{A^t}^\text{target}(t_A) > \Delta e_A(t_A)$ for all $1 \leq t_A \leq T$, then reviewer $A$ maximizes their utility by completing all criteria and reporting truthfully. Satisfying these constraints is feasible since $ \Delta R_{A^t}^\text{expert}(t_A), R_{A^t}^\text{target}(t_A) > 0$ for $t=t_A$ as shown in Lemma \ref{margrewlemma}.

By the same argument, we desire $\{\boldsymbol{\alpha}_B^{[T]}, \boldsymbol{\beta}_B^{[T]}\}$ satisfy $\Delta R_B(t_B) > \Delta e_B(t_B)$, and $\{\boldsymbol{\alpha}_C^{[T]}, \boldsymbol{\beta}_C^{[T]}\}$ satisfy $\Delta R_C(t_C) > \Delta e_C(t_C)$.  Then, for any reviewer $k$, if their peers complete all criteria and report truthfully, their marginal utility of completing an additional criterion is always positive, so reviewer $k$ strictly maximizes their expected utility by also completing all criteria and reporting truthfully. Hence, completing all criteria and reporting truthfully is a strict Bayesian-Nash equilibrium.

In setting the hyperparameters, we also wish to minimize the mechanism's cost for having a paper reviewed, so we wish to minimize the total payout $R_A(T) + R_B(T) + R_C(T) = \min \sum_{k \in \{A, B, C \}} \sum_{t=1}^T \alpha_k^t R_{k^t}^\text{expert}(T) +  \beta_k^t R_{k^t}^\text{target}(T)$. The optimal choice of hyperparameters is found by minimizing the mechanism's cost under the aforementioned constraints by solving the given linear program where the constraints are modified to a weak inequality, and adding an $\epsilon$ to the solution of the LP to strictly satisfy the constraints.

\end{proof}

\paragraph{Lemma \ref{revwelf}}
\emph{Let $s_k = \left(e_k(t_k), \hat{\bf x}_k^{[T]}, \hat{\bf p}^+_{k'^{[T]} \leftarrow k}, \hat{\bf p}_{k'^{[T]} \leftarrow k}\right)$ be reviewer $k$'s strategy, where $k'$ is reviewer $k$'s target. If ${\bf s} = (s_A, s_B, s_C)$ is an equilibrium strategy profile,
 reviewer $A$'s expected payment is:}
 \begin{equation} \small
     \mathds{E}[r_A] = \sum_{t=1}^T \alpha_A^t\left( -\mathds{E}\left[H\left(\mathds{Q}\left({\hat{X}_B^t | {\bf x}_A^{[t_A]}, \hat{\bf x}_C^{[t]}}\right)\right)\right] -\mathds{E}\left[H\left(\mathds{Q}\left({\hat{X}_B^t | {\bf x}_A^{[t_A]}, \hat{\bf x}_C^{[t-1]}}\right)\right)\right]\right)   + \beta_A^t I\left(\hat{X}_B^t; \hat{ X}_A^t | {\bf X}_C^{[t_C]}, \hat{\bf X}_B^{[t-1]}\right)
 \end{equation}
 
\emph{(expected payment for peers $B,C$ computed analogously). Reviewers' total expected payment is:}
\begin{equation}\small
    \begin{split}
    \mathds{E}[r_A + r_B + r_C] &=  \sum_{t=1}^T \alpha_A^t\left( -\mathds{E}\left[H\left(\mathds{Q}\left({\hat{X}_B^t | {\bf x}_A^{[t_A]}, \hat{\bf x}_C^{[t]}}\right)\right)\right] -\mathds{E}\left[H\left(\mathds{Q}\left({\hat{X}_B^t | {\bf x}_A^{[t_A]}, \hat{\bf x}_C^{[t-1]}}\right)\right)\right]\right)   + \beta_A^t I\left(\hat{X}_B^t; \hat{ X}_A^t | {\bf X}_C^{[t_C]}, \hat{\bf X}_B^{[t-1]}\right) \\
        &\phantom{=} + \alpha_B^t\left( -\mathds{E}\left[H\left(\mathds{Q}\left({\hat{X}_C^t | {\bf x}_B^{[t_B]}, \hat{\bf x}_A^{[t]}}\right)\right)\right] -\mathds{E}\left[H\left(\mathds{Q}\left({\hat{X}_C^t | {\bf x}_B^{[t_B]}, \hat{\bf x}_A^{[t-1]}}\right)\right)\right]\right)   + \beta_B^t I\left(\hat{X}_C^t; \hat{ X}_B^t | {\bf X}_A^{[t_A]}, \hat{\bf X}_C^{[t-1]}\right) \\
        &\phantom{=} + \alpha_C^t\left( -\mathds{E}\left[H\left(\mathds{Q}\left({\hat{X}_A^t | {\bf x}_C^{[t_C]}, \hat{\bf x}_B^{[t]}}\right)\right)\right] -\mathds{E}\left[H\left(\mathds{Q}\left({\hat{X}_A^t | {\bf x}_C^{[t_A]}, \hat{\bf x}_B^{[t-1]}}\right)\right)\right]\right)   + \beta_C^t I\left(\hat{X}_A^t; \hat{ X}_C^t | {\bf X}_B^{[t_B]}, \hat{\bf X}_A^{[t-1]}\right)
    \end{split}
\end{equation}

\begin{proof}[Proof of Lemma \ref{revwelf}]

 First, we observe that since ${\bf s}$ is an equilibrium strategy profile, all reviewers' \emph{predictions} are truthful, i.e., $\hat{\bf p}^+_{B^t \leftarrow A} = \mathds{Q}\left(\hat{X}_B^t | {\bf X}_A^{[t_A]}, \hat{\bf X}_C^{[t]} \right) $ 
 and $\hat{\bf p}_{B^t \leftarrow A} = \mathds{Q}\left(\hat{X}_B^t | {\bf X}_A^{[t_A]}, \hat{\bf X}_C^{[t-1]} \right)$, and so on for the other reviewers' expert predictions. Predictions are truthful in any equilibrium as otherwise reviewers could unilaterally improve their expert payment under the strictly proper log scoring rule by deviating to a truthful prediction. The expected payment for reviewers (expectation of Equation \ref{rew}) for criterion $t$ is:
\begin{equation}\label{allrew}
\begin{split}
    \mathds{E}_{\mathds{Q}}[r_{A^t}] &=  \mathds{E}[ \alpha_A^t\left(S_\text{log}(\hat{x}_B^t, \hat{\bf p}^+_{B^t \leftarrow A}) + S_\text{log}(\hat{x}_B^t, \hat{\bf p}_{B^t \leftarrow A})\right) + \beta_A^t\left(S_\text{log}(\hat{x}_A^t, \hat{\bf p}^+_{A^t \leftarrow C}) - S_\text{log}(\hat{x}_A^t, \hat{\bf p}_{A^t \leftarrow C})\right)] \\
    \mathds{E}_{\mathds{Q}}[r_{B^t}] &=   \mathds{E}[ \alpha_A^t\left(S_\text{log}(\hat{x}_C^t, \hat{\bf p}^+_{C^t \leftarrow B}) + S_\text{log}(\hat{x}_C^t, \hat{\bf p}_{C^t \leftarrow B})\right) + \beta_B^t\left(S_\text{log}(\hat{x}_B^t, \hat{\bf p}^+_{B^t \leftarrow A}) - S_\text{log}(\hat{x}_B^t, \hat{\bf p}_{B^t \leftarrow A})\right) ]\\ 
  \mathds{E}_{\mathds{Q}}[r_{C^t}] &=  \mathds{E}[ \alpha_C^t\left(S_\text{log}(\hat{x}_A^t, \hat{\bf p}^+_{A^t \leftarrow C}) + S_\text{log}(\hat{x}_A^t, \hat{\bf p}_{A^t \leftarrow C})\right) + \beta_C^t\left(S_\text{log}(\hat{x}_C^t, \hat{\bf p}^+_{C^t \leftarrow B}) - S_\text{log}(\hat{x}_C^t, \hat{\bf p}_{C^t \leftarrow B})\right)]
    \end{split}
\end{equation}

 Now, observe that reviewer $A$'s expected expert payment (first two terms in expectation) on some criterion $t$ can be written as follows:

\begin{equation}\small
\begin{split}
    \alpha_A^t\mathds{E}[S_\text{log}(\hat{x}_B^t, \hat{\bf p}^+_{B^t \leftarrow A}) + S_\text{log}(\hat{x}_B^t, \hat{\bf p}_{B^t \leftarrow A})] &= \alpha_A^t\mathds{E}\left[\mathds{Q}\left(\hat{X}_B^t | {\bf X}_A^{[t_A]}, \hat{\bf X}_C^{[t]} \right) + \mathds{Q}\left(\hat{X}_B^t | {\bf X}_A^{[t_A]}, \hat{\bf X}_C^{[t-1]} \right) \right] \\
    &= -\mathds{E}_{{\bf X}_A^{[t_A]}, \hat{\bf X}_C^{[t]}}\left[H\left(\mathds{Q}\left({\hat{X}_B^t | {\bf x}_A^{[t_A]}, \hat{\bf x}_C^{[t]}}\right)\right)\right] -\mathds{E}_{ {\bf X}_A^{[t_A]}, \hat{\bf X}_C^{[t-1]}}\left[H\left(\mathds{Q}\left({\hat{X}_B^t | {\bf x}_A^{[t_A]}, \hat{\bf x}_C^{[t-1]}}\right)\right)\right] 
    \end{split}
\end{equation}

where $H\left(\cdot \right)$ is the Shannon entropy. 
For reviewer $A$'s target payment (last two terms in expectation), we can use Equation \ref{logscore} to write:
\begin{equation}\label{eq:condmut}
    \mathds{E}_{\mathds{Q}}[ S_\text{log}(\hat{x}_A^t, \hat{\bf p}^+_{A^t \leftarrow C}) - S_\text{log}(\hat{x}_A^t, \hat{\bf p}_{A^t \leftarrow C})] = I\left(\hat{X}_B^t; \hat{ X}_A^t | {\bf X}_C^{[t_C]}, \hat{\bf X}_B^{[t-1]}\right)
\end{equation}
\begin{remark} 
Each reviewer's target payment on criterion $t$ is exactly the HMIP payment for that criterion under the conditional independence assumption (Assumption 4.3 in \citet{kong2018eliciting}). 
\end{remark}

We can then write reviewer $A$'s total expected payment on criterion $t$ as follows (and the same applies to $B$ and $C$):
\begin{equation}\label{eq:singrew}\small 
    \begin{split}
        \mathds{E}_{\mathds{Q}}[r_{A^t}]
        &= \alpha_A^t\left( -\mathds{E}_{{\bf X}_A^{[t_A]}, \hat{\bf X}_C^{[t]}}\left[H\left(\mathds{Q}\left({\hat{X}_B^t | {\bf x}_A^{[t_A]}, \hat{\bf x}_C^{[t]}}\right)\right)\right] -\mathds{E}_{ {\bf X}_A^{[t_A]}, \hat{\bf X}_C^{[t-1]}}\left[H\left(\mathds{Q}\left({\hat{X}_B^t | {\bf x}_A^{[t_A]}, \hat{\bf x}_C^{[t-1]}}\right)\right)\right]\right)   + \beta_A^t I\left(\hat{X}_B^t; \hat{ X}_A^t | {\bf X}_C^{[t_C]}, \hat{\bf X}_B^{[t-1]}\right)
    \end{split}
\end{equation}

Making the analogous computation for reviewers $B$ and $C$,  we can compute the total expected payment for the reviewers on a single criterion $t$:
\begin{equation}\small
    \begin{split}
        \mathds{E}_{\mathds{Q}}[r_{A^t} + r_{B^t} + r_{C^t}]
        &= \alpha_A^t\left( -\mathds{E}\left[H\left(\mathds{Q}\left({\hat{X}_B^t | {\bf x}_A^{[t_A]}, \hat{\bf x}_C^{[t]}}\right)\right)\right] -\mathds{E}\left[H\left(\mathds{Q}\left({\hat{X}_B^t | {\bf x}_A^{[t_A]}, \hat{\bf x}_C^{[t-1]}}\right)\right)\right]\right)   + \beta_A^t I\left(\hat{X}_B^t; \hat{ X}_A^t | {\bf X}_C^{[t_C]}, \hat{\bf X}_B^{[t-1]}\right) \\
        &\phantom{=} + \alpha_B^t\left( -\mathds{E}\left[H\left(\mathds{Q}\left({\hat{X}_C^t | {\bf x}_B^{[t_B]}, \hat{\bf x}_A^{[t]}}\right)\right)\right] -\mathds{E}\left[H\left(\mathds{Q}\left({\hat{X}_C^t | {\bf x}_B^{[t_B]}, \hat{\bf x}_A^{[t-1]}}\right)\right)\right]\right)   + \beta_B^t I\left(\hat{X}_C^t; \hat{ X}_B^t | {\bf X}_A^{[t_A]}, \hat{\bf X}_C^{[t-1]}\right) \\
        &\phantom{=} + \alpha_C^t\left( -\mathds{E}\left[H\left(\mathds{Q}\left({\hat{X}_A^t | {\bf x}_C^{[t_C]}, \hat{\bf x}_B^{[t]}}\right)\right)\right] -\mathds{E}\left[H\left(\mathds{Q}\left({\hat{X}_A^t | {\bf x}_C^{[t_A]}, \hat{\bf x}_B^{[t-1]}}\right)\right)\right]\right)   + \beta_C^t I\left(\hat{X}_A^t; \hat{ X}_C^t | {\bf X}_B^{[t_B]}, \hat{\bf X}_A^{[t-1]}\right)
    \end{split}
\end{equation}

Thus, the total expected payment for the reviewers for all criteria is:
\begin{equation}\small
    \begin{split}
    \mathds{E}_{\mathds{Q}}[r_A + r_B + r_C] &= \mathds{E}\left[  \sum_{k \in \{A, B, C\}} \sum_{t=1}^T  r_{k^t} \right] \\
    &= \sum_{t=1}^T \alpha_A^t\left( -\mathds{E}\left[H\left(\mathds{Q}\left({\hat{X}_B^t | {\bf x}_A^{[t_A]}, \hat{\bf x}_C^{[t]}}\right)\right)\right] -\mathds{E}\left[H\left(\mathds{Q}\left({\hat{X}_B^t | {\bf x}_A^{[t_A]}, \hat{\bf x}_C^{[t-1]}}\right)\right)\right]\right)   + \beta_A^t I\left(\hat{X}_B^t; \hat{ X}_A^t | {\bf X}_C^{[t_C]}, \hat{\bf X}_B^{[t-1]}\right) \\
        &\phantom{=} + \alpha_B^t\left( -\mathds{E}\left[H\left(\mathds{Q}\left({\hat{X}_C^t | {\bf x}_B^{[t_B]}, \hat{\bf x}_A^{[t]}}\right)\right)\right] -\mathds{E}\left[H\left(\mathds{Q}\left({\hat{X}_C^t | {\bf x}_B^{[t_B]}, \hat{\bf x}_A^{[t-1]}}\right)\right)\right]\right)   + \beta_B^t I\left(\hat{X}_C^t; \hat{ X}_B^t | {\bf X}_A^{[t_A]}, \hat{\bf X}_C^{[t-1]}\right) \\
        &\phantom{=} + \alpha_C^t\left( -\mathds{E}\left[H\left(\mathds{Q}\left({\hat{X}_A^t | {\bf x}_C^{[t_C]}, \hat{\bf x}_B^{[t]}}\right)\right)\right] -\mathds{E}\left[H\left(\mathds{Q}\left({\hat{X}_A^t | {\bf x}_C^{[t_A]}, \hat{\bf x}_B^{[t-1]}}\right)\right)\right]\right)   + \beta_C^t I\left(\hat{X}_A^t; \hat{ X}_C^t | {\bf X}_B^{[t_B]}, \hat{\bf X}_A^{[t-1]}\right)
    \end{split}
\end{equation}
\end{proof}

\paragraph{Theorem \ref{ir}}
 \begin{enumerate}[label=(\alph*)]
     \item \emph{In the highest paying uninformative equilibrium (where all reviewers exert zero effort), every reviewer's expected payment (and hence aggregate reviewer welfare) is zero.}
     
     \item \emph{Suppose that in addition to the constraints specified in Theorem \ref{optparam}, the mechanism hyperparameters also satisfy $R_k(T) - e_k(T) > 0 \Leftrightarrow \sum_{t=1}^T\alpha_k^t R_{k^t}^\text{expert}(T) + \beta_k^t R_{k^t}^\text{target}(T)  > e_k(T)$ for $k \in \{A, B, C\}$, i.e., reviewers' expected utility in the fully-informative equilibrium is positive. Then, it is individually rational to participate in the mechanism and complete all criteria and report truthfully, given peers are doing the same. Additionally, the fully-informative equilibrium has higher individual and aggregate utility than the uninformed equilibrium.}
 \end{enumerate}

 \begin{proof}[Proof of Theorem \ref{ir}]
 \phantom{a}
 
 \begin{enumerate}[label=(\alph*)]
 
 \item When all reviewers exert zero effort, their reports are not sampled from the common prior. Instead, they may report some fixed a priori most likely sequence of signals. In such an equilibrium, expert predictions without the source will be perfect and the updated predictions will be the same, as the expert can simply predict the constant signals with complete certainty. Thus, the entropy of expert predictions is minimized and expert payments will be 0. Since expert predictions will not change, all target payments will also be 0 in expectation. Intuitively, this is because the signals do not come from the common prior, and do not contain information about other reviewers' signals.
 
 In any other uninformative equilibrium, expected target payments will still be zero since mutual information of uninformed signals not drawn from the common prior will be zero. However, the expert payment can only decrease (the maximum possible expert payment is zero). Thus, in the highest paying uninformative equilibrium, reviewers individually and in aggregate are paid nothing in expectation.
 
 \item It is individually rational for a reviewer to participate in the mechanism when their expected utility is positive. We want individual rationality under our desired fully informative equilibria, so we achieve positive expected utility by requiring $R_k(T) - e_k(T) > 0$ or equivalently $\sum_{t=1}^T \alpha_k^t  R_{k^t}^\text{expert}(T) + \beta_k^t R_{k^t}^\text{target}(T) > e_k(T) $.
 
 Denoting the expected utility of some arbitrary uninformative equilibrium as $R_k(0,0 ,0)$, we observe that $R_k(T) > e_k(T) > 0 \geq R_k(0,0,0)$, thus the individual utility of the fully-informative equilibrium is higher than any uninformative equilibrium. This holds for all reviewers, so the aggregate utility of the fully-informative equilibrium is also higher than the uninformative equilibrium.
 \end{enumerate}
 \end{proof}

\newpage

\end{document}